\newtheorem{theorem}{Theorem}
\newtheorem{corollary}{Corollary}
\newtheorem{lemma}{Lemma}
\newtheorem{proposition}{Proposition}
\theoremstyle{definition}
\theoremstyle{remark}\newtheorem{remark}{Remark}
\newcommand{\bmat}{\begin{bmatrix}}
\newcommand{\emat}{\end{bmatrix}}
\newcommand{\Frac}[2]{{\displaystyle\frac{#1}{#2}}}
\newcommand{\var}{\mathop{\rm var}}
\newcommand{\ltwonorm}[1]{\|#1\|_{\ell^2}}
\DeclareMathOperator{\diag}{diag}
\DeclareMathOperator{\trace}{tr}
\DeclareMathOperator{\rank}{rank}
\DeclareMathOperator{\supp}{supp}
\DeclareMathOperator*{\argmax}{argmax}
\DeclareMathOperator*{\argmin}{argmin}
\DeclareMathOperator{\symtoep}{SymToep}
\newcommand{\E}{{\mathbb E}}
\newcommand{\Rbb}{\mathbb R}
\newcommand{\Jbb}{\mathbb J}
\newcommand{\Cbb}{\mathbb C}
\newcommand{\Zbb}{\mathbb Z}
\newcommand{\xb}{\mathbf  x}
\newcommand{\yb}{\mathbf  y}
\newcommand{\sbf}{\mathbf  s}  
\newcommand{\zb}{\mathbf  z}
\newcommand{\wb}{\mathbf  w}
\newcommand{\vb}{\mathbf  v}
\newcommand{\ab}{\mathbf a}
\newcommand{\bb}{\mathbf  b}
\newcommand{\cb}{\mathbf  c}
\newcommand{\ub}{\mathbf  u}
\newcommand{\Cb}{\mathbf C}
\newcommand{\Db}{\mathbf D}
\newcommand{\Hb}{\mathbf H}
\newcommand{\Kb}{\mathbf K}
\newcommand{\Mb}{\mathbf M}
\newcommand{\Qb}{\mathbf Q}
\newcommand{\Rb}{\mathbf R}
\newcommand{\Sb}{\mathbf S}
\newcommand{\Xb}{\mathbf  X}
\newcommand{\varphib}{\boldsymbol{\varphi}}
\newcommand{\xib}{\boldsymbol{\xi}}
\newcommand{\omegab}{\boldsymbol{\omega}}
\newcommand{\Phib}{\boldsymbol{\Phi}}
\newcommand{\Sigmab}{\boldsymbol{\Sigma}}
\newcommand{\Bfrak}{\mathfrak{B}}
\newcommand{\Tfrak}{\mathfrak{T}}
\newcommand{\Fcal}{\mathcal{F}}
\newcommand{\Ycal}{\mathcal{Y}}
\newcommand{\Ucal}{\mathcal{U}}
\renewcommand{\d}{\mathrm{d}}
\renewcommand{\Re}{\mathrm{Re}}
\newcommand{\m}{\mathrm{m}}
\newcommand{\MAP}{\mathrm{MAP}}
\newcommand{\SNR}{\mathrm{SNR}}
\newcommand{\F}{\mathrm{F}}
\begin{document}

\title{An Empirical Bayes Approach to Frequency Estimation}

\author{Giorgio Picci and Bin~Zhu
\thanks{G.~Picci is with the Department of Information Engineering, University of Padova, Via Giovanni Gradenigo, 6b, 35131 Padova, Italy. B.~Zhu is with the School of Intelligent Systems Engineering, Sun Yat-sen University, Waihuan East Road 132, 510006 Guangzhou, China (email: \texttt{picci@dei.unipd.it, zhub26@mail.sysu.edu.cn}).}
\thanks{A preliminary version of this work was presented at the 17th European Control Conference (ECC 2019).}%
}


\maketitle

\begin{abstract}
In this paper we show that the classical problem of frequency estimation can be formulated and solved efficiently in an empirical Bayesian framework by assigning a uniform a priori probability distribution to the unknown frequency. We discover that the a posteriori covariance matrix of the signal model is the discrete-time counterpart of an operator whose eigenfunctions are the famous {\em prolate spheroidal wave functions}, introduced by Slepian and coworkers in the 1960's and widely studied in the signal processing literature although motivated by a different class of problems. The special structure of the covariance matrix is exploited to design an estimator for the hyperparameters of the prior distribution which is essentially linear, based on subspace identification. Bayesian analysis based on the estimated prior then shows that the estimated center-frequency is asymptotically coincident with the MAP estimate. This stochastic approach leads to consistent estimates, provides uncertainty bounds and may advantageously supersede standard parametric estimation methods which are based on iterative optimization algorithms of local nature. Simulations show that the approach is quite promising and seems to compare favorably with some classical methods.

\end{abstract}

\begin{IEEEkeywords}
Frequency estimation, Empirical Bayes, prolate spheroidal wave functions, modulated Sinc kernels, subspace methods, multiple frequency and DOA estimation.
\end{IEEEkeywords}

\section{Introduction}

Frequency estimation is an old nonlinear problem  encountered   in many branches of science and engineering which has generated a huge literature. The survey of the literature up to 1993 in  \cite{Stoica-93} contains more than 300 titles. Since the literature on this problem is so large it is  impossible   to present    a reasonably complete summary in this introduction. For a general overview we shall just limit to refer to  the books   \cite{Stoica-M-05,Quinn-H-01,Kay-88} and  to  the references therein.

The most classical frequency estimation method   is via spectral analysis, based on the direct use of the periodogram which however   tends to produce nonconsistent estimates and must rely on ad hoc recombinations of partial spectral estimates (see e.g. \cite{Rice-R-88} and the comments in the introduction of  Thomson's paper \cite{Thomson-82}).     Research in this framework has nevertheless continued  and we should here at least point  to some recent interesting contributions such as \cite{Georgiou-00,Georgiou-01,BGL-00}.

Another   rather popular class of methods is based   on the so-called signal subspace decomposition.   The forerunner of signal subspace decomposition method (SSDM) is  Pisarenko harmonic decomposition, followed by  MUSIC, ESPRIT, and multiple signal classification methods. A survey of these methods can be found in the book   \cite{Stoica-M-05}. They are all based on linear algebra operations on the sample covariance matrix of the observed process and for this reason are quite popular. However in a way or another these methods rely on a rank estimation step  and on a (unavoidably approximate) rank-factorization of the sample covariance. This feature, in our opinion may generate some uncertainty on  their statistical properties, in particular consistency.

Accurate frequency estimation  has been mostly  approached in the literature by nonlinear optimization techniques, typically variants of Maximum Likelihood,   of which  a   remarkable example is  the early paper \cite{Nehorai-85}. Unfortunately,  because of nonconvexity, these methods are generally local and not guaranteed to yield a unique optimum.
Convex relaxation algorithms based on atomic norm minimization have  appeared   recently \cite{Yangetal-17},\cite{Zhuetal-17},\cite{Zhu-Wakin-17} but these methods rely on heavy regularization which in principle cannot produce unbiased estimates. A thorough  statistical analysis of these methods    still seems to be missing. 

\subsection*{New results}
In this paper we  follow  a Bayesin approach.  The underlying model is  the classical sum of harmonic oscillations corrupted by additive white noise,  whose frequencies are  modeled as   randomly varying parameters. Data are   modeled as trajectories of a process  whose   frequency  may   deviate sightly about an unknown  nominal value. It is then reasonable to   model frequency  as a random variable,  a  noisy  versions of  some  nominal frequency.

Bayesian estimation techniques for this model have been proposed in various places, e.g.\cite{Bretthorst-88,Bretthorst-97,Zacharias-etal-13,Turks-13,Dou-Hodgson-95,Djuric-Li-95} based on  various choices of the  prior distribution. Here we   propose an   approach based on the {\em Empirical  Bayes} philosophy,  inferring   from the observed data a family  of parametric prior distributions on the unknown frequencies. This approach    to frequency estimation seems to be new.  

The   parametric a priori density   is chosen as a uniform distribution on a small frequency range of unknown width, which   can be interpreted as  an a priori  confidence interval centered about  some  unknown  nominal frequencies. The  width and the relative center frequencies are the  {\em hyperparameters} of the prior which  are estimated from data. This simple model  seems to be a reasonable model for a variety of applications. Frequency variations on a  small bandwidth  could describe an experiment where one is   measuring the frequency shift of an oscillator (a function generator generating an AC waveform) with variable center frequency. That is, the  central frequency is unknown (random in $[-\pi,\pi]$) and also there is an unknown frequency shift of $[-W/2, W/2]$ radians/sample (which is also random and uniformly distributed). The random signal being  observed under additive white Gaussian  noise.\footnote{We thank one  reviewer for supplying this example.}

In this frame we show that the estimation of the hyperparameters can be approached by a simple efficient {\em subspace algorithm}. This in contrast with the standard marginal likelihood approach  as  considered for example  in \cite {Lazaro-Q-R-F-10,Aravkin-etal-12}. Our work uses more deeply the  structure of the data process and need not   involve optimization, going well  beyond the marginal likelihood approach.  For a  survey and some bibliography on Empirical Bayes methods  we refer to  \cite[p. 262]{Lehmann-C-98}, \cite{Efron-10,Efron-14,Chiuso-15,Petrone-etal-14,Aravkin-etal-12}. 
A general underlying motivation for the Empirical Bayes approach is that in some cases it has been proven   to yield a mean squared error (MSE),  which  can  even be smaller than maximum likelihood \cite{Reinsel-85,Yuanetal-16}.

Assuming a true model with a true  unknown  center frequency  hyperparameter, one can prove consistency of the subspace estimation method which  justifies our procedure in the framework of the traditional frequentist interpretation of the  hyperparameter. Later on,  we 
shall see  that   the (empirical) Bayesian MAP    frequency   estimate is   very close to (and in fact may asymptotically  coincide with)   the subspace centerfrequency  estimate. 

\subsection*{Relation with Prolate  Spheroidal Wave Functions}
Imposing the class of parametric uniform priors  leads to a simple   probabilistic structure of the signal. One ends up  by describing the observed signal as a special stationary process   named {\em bandlimited white noise } which has a flat power  spectrum within some finite bandwidth, whose  generation was first studied in the conference papers \cite{Favaro-P-15,Picci-Z-19}.  The remarkable fact is that  the covariance operator of these processes has isomorphic  properties  to those  uncovered  in the 60's and 70's by D.~Slepian and coworkers in a famous series of papers studying the energy concentration properties of time- and band-  limited signals, a completely different problem in a completely deterministic  context \cite{slepian1961prolateI,Slepian-P-61bis,Slepian-78,Landau-W-80}. The monograph \cite{hogan2011duration} is also  a good reference on this topic. We believe that an important contribution of this paper is to point out this stochastic interpretation and show its   usefulness in random  signal analysis. 
In  section \ref{CovProp} we  make contact with the classical works of David Slepian and colleagues.  In particular, here we elaborate on  the {\em bandpass} analogues of Prolate  Spheroidal Wave Functions  whose properties were still unknown, as mentioned  in a concluding remark in the paper   \cite{slepian1961prolateI}.

We discover that the whole theory  of bandlimited  time/frequancy  analysis of Slepian and co-workers, which for decades has  only been used for deterministic signal analysis, can be   transported to the stochastic setting allowing a deep understanding and a fine analysis  of the structure of the covariance of stationary signals with harmonic components. This has dramatic consequences.  For the first time our analysis  allows  a precise characterization of the finite-data approximation and truncation errors of the covariance kernel of the observed signal which is inherent in many covariance-based signal processing  methods of the literature.  Similar to Slepian's theory we discover that the eigenvalues of the covariance operator  decay abruptly to infinitesimal values (practically zero) after staying constant up to a certain {\em a priori computable} number, which can be identified as the numerical rank of the matrix. One can in fact get a rather precise estimate of the rank of a finitely-truncated covariance matrix and work with approximations of known precision. This was never suspected before and in all current literature, the  use of finite rank covariance approximations   to finite data sets  is assumed   without  much   of no  analysis of  the quality of  approximation.

In this  setting we can rigorously  justify the use of subspace methods based on finite rank purely-deterministic approximation of the process and its representation by state-space models.  

The proposed stochastic model embraces (in a Bayesian framework) the theoretical covariance structure  underlying many classical subspace methods used for frequency and DOA estimation such as   MUSIC, ESPRIT  and  descendants. In a sense our theory and results shed light on the foundations and approximation inherent in   these methods. In particular it allows a precise analysis of the finite-rank signal approximation  which is rarely addressed in the literature. As a result of  this analysis   a neat general proof of consistency can be provided.

More specifically, because of  the uniform frequency prior, the covariance of the observed  process turns out to be a function of the   {\em  modulated Sinc}-type, which in the special case  of nominal center frequency equal to  zero, has been well studied in the afore-cited literature. The key  property of the covariance  operator in question is that its eigenvalues decay extremely fast to zero for indices greater than an  \emph{a priori} computable number (the so-called {\em Slepian frequency} \cite{Khare-06}). This means that the eigenfunction expansion of the covariance kernel involves essentially only a {\em finite number of terms}. This key feature was  already evident and well-studied  in the classical deterministic literature  when the center frequency is zero but for non zero center frequencies  a  thorough understanding of the behavior of these modulated {\em Sinc }operators was posed as an open problem in \cite[p.~63]{slepian1961prolateI}. Later it was shown to hold for continuous-time modulated Sinc kernels  in \cite{landau1975szego,Landau-W-80} but the discrete-time case was left open. In this paper, we provide a  proof that modulated discrete-time kernels behave in a completely analogous way. This fact allows a direct and rather simple estimation of one hyperparameter  of the prior. The resulting center frequency estimate  is  computed by a subspace algorithm followed by a simple averaging process which seems to yield  very accurate and robust results, at least for a large enough sample size. This new  estimation  method is  expounded for signals with multiple unknown frequencies. 

\subsection*{Layout}

The paper  is organized as follows:

In Section \ref{sec:prob}, we formulate the Bayesian framework for the frequency estimation problem. We first  deal with signals with one hidden sinusoidal component but  the techniques and results are then  extended to treat signals with multiple harmonic components of unknown frequencies by assigning   them  non-overlapping rectangular (uniform)  prior distributions.
In this way the overall covariance kernel   becomes the sum of the individual covariances of   uncorrelated harmonic components.  Our technique can still  be applied and is somehow reminiscent of  Multiple Kernel methods   as in \cite{Hoffmann-S-S-08,Bach-L-J-04}. 

Then in Section \ref{CovProp}, we discuss the special structure of the signal covariance which is a discrete-time counterpart of the  {\em modulated Sinc kernel} class  discussed in the literature. We prove   the sharp decay property  of the eigenvalues using techniques inspired by  the continuous-time results from the literature. Then we illustrate our findings through a numerical example.

In Section \ref{sec:cov_est} we  exploit the covariance structure to propose an extremely  simple frequency estimate for signals with only one unknown frequency, which is only based on spectral data of the covariance. Note that because of non-ergodicity, consistent  estimation of the covariance data is a non-trivial issue. 

Section \ref{Subspace} attacks the main theme of the paper, namely estimation of multiple center-frequencies using a subspace method. By the finite rank  property one can use  a natural approximate state-space model of the  data.
 
Consistency of the subspace estimator is then discussed in Section \ref{Consist}.

Section \ref{Bayes} addresses the    MAP Bayesian estimator of the random frequency $\omegab$ based on the estimated prior discussed in  Section  \ref{Subspace}. 

In the following Section \ref{Simul2}, the method is applied to several test examples. As can be seen, the results are very encouraging.

At last, Section \ref{Conclusions} concludes the paper.

\subsection*{Notation and conventions}
Boldface symbols denote random quantities.
For a  square summable sequence $y$ of complex numbers, we take the definition of the discrete-time Fourier transform (DTFT) to be the following
\begin{equation*}
\begin{split}
\Fcal:\, \ell^2 & \to L^2[-\pi,\pi]\\
y & \mapsto \hat{y}(\omega):= \sum_{t\in\Zbb} y(t) e^{-i t \omega},
\end{split}
\end{equation*}
where the convergence of the Fourier series is understood in $L^2$ norm. The inverse transform is given by
\begin{equation*}
\Fcal^{-1}:\, \hat{y} \mapsto y(t) := \frac{1}{2\pi} \int_{-\pi}^{\pi} e^{i t \omega} \hat{y}(\omega) \d\omega.
\end{equation*}
The $\ell^2$ norm of $y$ is known as thefr energy of the signal.

The indicator function on a set $S\subset\Omega$ is defined as
\begin{equation*}
\chi_S(\omega) = \left\{ \begin{array}{ll}
1 & \textrm{for }\omega \in S,\\
0 & \textrm{for } \omega\in \Omega \setminus S.\\
\end{array} \right.
\end{equation*}

\section{Signal Model}
\label{sec:prob}

Consider the following signal model 
\begin{equation}\label{y_measurement}
\yb(t)=\xb(t)+\wb(t), \quad t\in \Zbb
\end{equation}
where $t$ represents  time,  $\xb$ is the sum of   random oscillatory components (a quasi periodic process), that is
\begin{equation}\label{x_multi_sinu}
\xb(t):= \sum_{\ell=1}^{\nu} \ab_\ell \cos (\omegab_\ell t) + \bb_\ell \sin (\omegab_\ell t),
\end{equation}
and $\wb$ is additive white noise. The angular  frequencies $\omegab_\ell$ are unknown but their number $\nu$   is fixed in advance. In addition  we shall require that:
\begin{itemize}
	\item the amplitude pairs $\ab_k,\bb_k$ are zero-mean pairwise and mutually uncorrelated   for all  $k$ and the two components $\ab_k,\bb_k$ have  equal variance:  $\sigma_k^2= \var [\ab_k]= \var [\bb_k], k=1,\ldots ,\nu$;
	\item each angular frequency $\omegab_\ell$ is a random variable taking values in the interval $[0,\pi]$, independent of the amplitudes;	
\item The noise $\wb(t)$ is assumed  white, zero-mean Gaussian,  stationary of variance $\sigma_{\wb}^2$, independent of everything else.   
\end{itemize}
We shall let $\omegab:= \bmat \omegab_1&\ldots& \omegab_\nu\emat^{\top}$ and denote by   $\ab,\,\bb$ two  similarly arranged amplitude vectors.  Note that the model is linear in $\ab,\,\bb$, and hence estimation of the amplitudes and their variance is just a standard linear estimation problem when the frequencies are  known. For this reason,  in this paper we shall mostly concentrate on the problem of frequency estimation.

Let us now introduce the Empirical Bayesian framework. We shall impose  that each component $\omegab_\ell$ of the random vector $\omegab$ follows a uniform distribution on the frequency band  $[\theta_\ell-W_\ell,\theta_\ell+W_\ell]$ such that the symmetrized sets w.r.t. the origin 
$$
S_\ell:=[\theta_\ell-W_\ell,\theta_\ell+W_\ell]\cup[-\theta_\ell-W_\ell,-\theta_\ell+W_\ell], \quad \ell=1,\dots,\nu
$$
do not overlap. 	For simplicity we shall assume that the assigned bandwidth is the same for different frequencies, i.e., $W_1=\cdots=W_\nu=W$. Here $0\leq\theta_\ell \leq\pi$ is called a {\em center-frequency} and $0\leq W\leq \pi$   the {\em bandwidth}.
In the literature, both $\theta$ and $W$ are called \emph{hyperparameters} of the {\em a priori} distribution for the frequency $\omegab$.

The stated  assumptions imply that for each fixed frequency value $\omega$ the $\nu$ components, say $\xb_\ell$, $\ell=1,\dots,\nu$ of the signal \eqref{x_multi_sinu}  are stationary uncorrelated processes. Hence the covariance function of the process $\yb$ for a fixed deterministic  $\omega$ has the form
\begin{equation}\label{OutCov}
\Sigma(t,s\mid \omega) := \E \left\{ \yb (t) \yb (s)\mid \omega \right\} = K(t,s\mid \omega) + \sigma_{\wb}^2 \, \delta (t,s)
\end{equation}
where $\delta(t,s)$ is the Kronecker symbol, and 
$$
K(t,s\mid \omega):=\sum_{\ell=1}^{\nu}\E \left\{ \xb_\ell(t) \xb_\ell (s) \mid \omega\right\}= \sum_{\ell=1}^{\nu} K_\ell(t,s\mid \omega)
$$
is the a priori conditional covariance of the signal $\xb$ given $\omegab= \omega$. To lighten the notation, we shall temporarily suppress the subscripts. The formulas below should be interpreted as holding for a generic index $\ell$.

By the model assumptions, the following computation is straightforward:
\begin{align}\label{ModSinc}
K(t,s\mid \omega) & = \E \left\{ \ab^2 \cos(\omega t) \cos(\omega s) + \ab\bb \cos(\omega t) \sin(\omega s) \right. \notag \\
 & \quad \left. + \ab\bb \sin(\omega t) \cos(\omega s) + \bb^2 \sin(\omega t) \sin(\omega s) \right\} \notag \\
 & = \sigma^2 \cos\omega \tau 
 \end{align}
where $\tau:=t-s$, and then computing the a posteriori covariance   by integrating the function w.r.t. the uniform prior density, one gets
\begin{align}\label{ModSinc1}
K(t,s) = \sigma^2 \, \E \left( \cos\omegab \tau \right)  
 &  = \sigma^2 \int_{\theta-W}^{\theta+W} \cos (\omega \tau) \frac{1}{2W} \d\omega \notag \\
 & = \sigma^2\cos (\theta \tau) \frac{\sin W\tau}{W\tau}.
\end{align}
Since the covariance function depends only on $\tau$, the signal $\xb$ is stationary, and so is $\yb$. In the following, we will write $K(\tau)$ in place of $K(t,s)$.

For $\theta=0$, the covariance function $K$ is the well-known {\em Sinc function}, which is the inverse Fourier transform of a rectangular function, namely
\begin{equation}
\sigma^2 \, \frac{\sin W\tau}{W\tau} = \frac{\sigma^2}{2W} \int_{-W}^{W} e^{i \omega \tau} \d\omega.
\end{equation}
It follows that    a zero-frequency  component of  the process $\xb$ must have   a uniform spectral density $\frac{\pi\sigma^2}{W}\chi_{[-W\,,W]}(\omega)$.
When $ W=\pi$, the process is just a usual stationary white noise of variance $\sigma^2$.
For $W<\pi$, the process $\xb$ is nontrivial, called a {\em bandlimited white noise} within the frequency band $[-W,\,W]$. In this case, it is a purely deterministic process with an absolutely continuous spectral distribution, since the logarithm of the density is obviously not integrable (see e.g., \cite[p.~144]{LPBook}).

In this paper, we are primarily interested in the case $\theta_{\ell} \neq 0$, for which we make the assumption that $|\theta_{\ell}|>W$, so that each support  set
\begin{equation}\label{set_S}
S:=[\theta-W,\theta+W]\cup[-\theta-W,-\theta+W]
\end{equation}
is composed of two  disjoint intervals symmetric with respect to the origin. Then the last expression in \eqref{ModSinc} can be rewritten as
\begin{equation}
\begin{split}
\sigma^2 \cos (\theta \tau) \frac{\sin W\tau}{W\tau} & = \frac{\sigma^2}{4W} \int_{-\pi}^{\pi} \cos (\omega \tau) \chi_S(\omega) \d\omega \\
 & = \frac{\pi\sigma^2}{2W} \int_{-\pi}^{\pi} e^{i\omega \tau} \chi_S(\omega) \frac{\d\omega}{2\pi} \\
\end{split} 
\end{equation}
where $\chi_S$ is the indicator function of $S$, and the second equality holds due to the symmetry of the integrand. From the above relation, we see that the spectral density of the process $\xb$ is now the sum of $\nu$ disjoint spectral terms, each of the form
$$
\phi_{\xb_{\ell}}(\omega)= \frac{\pi\sigma_{\ell}^2}{2W}\left( \chi_{[\theta_{\ell}-W,\;\theta_{\ell}+W]}+ \chi_{[-\theta_{\ell}-W,\;-\theta_{\ell}+W]}\right).
$$
The signal $\xb$ can therefore be described as a sum of independent deterministic carriers, each of angular frequency $\theta_{\ell}$, amplitude-modulated by a bandlimited white noise process described before.
For the same reason, the covariance function \eqref{ModSinc} has been called a {\em modulated sinc kernel} in \cite{Khare-06}, where it arises in a different context.

In practice we can only observe sample paths of finite length $N$ from the process $\yb$. For clarity of exposition, we shall now assume that $\nu=1$ and neglect the subscript $_\ell$ altogether. The generalization to multiple sinusoids, i.e., $\nu >1$, will be obvious.   Collect the observed random variables into a column vector, and in particular, let  $\Xb_N:=[\xb(t),\xb(t+1),\dots,\xb(t+N-1)]^\top$. Then consider the $N\times N$ covariance matrix 
\begin{equation}\label{Cov_matK}
\begin{split}
\Kb_N & :=\E\{\Xb_N\Xb_N^\top\} \\
 & = \begin{bmatrix}
 K(0)&K(1)&\cdots&K(N-1)\\
 K(1)&K(0)&\cdots& K(N-2)\\
 \vdots&\vdots&\ddots&\vdots\\
 K(N-1)&K(N-2)&\cdots&K(0)
 \end{bmatrix} .
\end{split}
\end{equation}
This symmetric Toeplitz structure of the covariance matrix comes from the fact that the process is stationary and real-valued. Similarly, we can define the $N\times N$ covariance matrix of the process $\yb$, say $\Sigmab_N$, and we have the relation
\begin{equation}\label{Cov_matSigma}
\Sigmab_N=\Kb_N+\sigma^2_{\wb} I_N.
\end{equation}
Analysis of the eigen-structure of $\Kb_N$ will be of great importance to our frequency estimation problem, and that will be the content of the next section.

\section{Properties of the Covariance Matrix}\label{CovProp}


In this section, we show that the covariance matrix \eqref{Cov_matK} also arises in a quadratic form which is the essential instrument for solving the energy concentration problem for discrete-time deterministic signals. In order to state the problem, we  first need to set up some notations.
Let $J$ be a set that is a union of a finite number of pair-wise disjoint closed subintervals of $[-\pi,\pi]$, e.g., a union of    sets like  $S$ in \eqref{set_S}. Define the band-limiting operator
\begin{equation}\label{oper_bandlimit}
\Bfrak:\, \ell^2 \to \ell^2, \quad y \mapsto \Fcal^{-1}[\chi_J \Fcal(y)]
\end{equation}
that corresponds to a bandpass filter with prescribed bandwidth $\{\omega \in J\}$.
Fix a positive integer $N$ and let
\begin{equation}\label{I_set}
I:=\{0,1,\dots,N-1\}.
\end{equation}
Define similarly the time-limiting operator
\begin{equation}\label{oper_timelimit}
\Tfrak:\, \ell^2 \to \ell^2, \quad y \mapsto \chi_I y,
\end{equation}
where $\chi_I$ is the indicator function in the time domain $\Zbb$.

The energy concentration problem that will be discussed in this section is
\begin{equation}\label{energy_concen}
\sup_{y \in \ell^2} \frac{\ltwonorm{\Bfrak \Tfrak y}^2}{\ltwonorm{y}^2}.
\end{equation}
Notice that the supremum can only be attained at a time-limited $y$, because the objective value of $\tilde{y}:=\Tfrak y$ is equal to $\ltwonorm{\Bfrak \tilde{y}}^2 / \ltwonorm{\tilde{y}}^2$ which is not less than that of $y$. Therefore, it is equivalent to consider the problem
\begin{equation}
\sup_{\substack{y \in \ell^2 \\ \supp(y) \subset I}} \frac{\ltwonorm{\Bfrak y}^2}{\ltwonorm{y}^2},
\end{equation}
where $\supp(\cdot)$ denotes the support of a function.
In other words, the aim is to find a time-limited signal whose energy is most concentrated in the frequency band $J$.

\subsection{The eigenvalue problem}

The impulse response of the ideal bandpass filter $\chi_J(\omega)$ is just the inverse Fourier transform
\begin{equation}\label{bandpass_imp_resp}
\rho(t) := \frac{1}{2\pi} \int_J e^{it\omega} \d\omega \quad t\in\Zbb.
\end{equation}
Observe that the function $\rho$ has the symmetry $\rho(-t)=\rho(t)^*$ where $z^*$ means the complex conjugate (transpose) of $z\in\Cbb$. 

According to the definitions \eqref{oper_bandlimit} and \eqref{oper_timelimit}, we have
\begin{equation}
\begin{split}
\Bfrak \Tfrak y & = \Fcal^{-1} \left[\chi_J(\omega) \Fcal(\Tfrak y)\right] \\
 & = \Fcal^{-1} \left[ \chi_J(\omega) \sum_{t=0}^{N-1} y(t) e^{-it\omega} \right] \\
 & = \rho*\Tfrak y
\end{split}
\end{equation}
where $*$ denotes convolution. It follows that
\begin{equation}
\begin{split}
\ltwonorm{\Bfrak \Tfrak y}^2 & = \sum_{t\in\Zbb} \left| \sum_{k=0}^{N-1} \rho(t-k) y(k) \right|^2 \\
 & = \sum_{j=0}^{N-1} y(j)^* \sum_{k=0}^{N-1} y(k) \sum_{t\in\Zbb} \rho(t-j)^* \rho(t-k).
\end{split}
\end{equation}
The last summation can be rewritten
\begin{equation}
 \begin{split}
 \sum_{t\in\Zbb} \rho(t-j)^* \rho(t-k)  
& =  \sum_{t\in\Zbb} \rho(j-t) \rho(t-k) \\
& = (\rho * x) (j),
 \end{split}
\end{equation}
where the sequence $x(t):=\rho(t-k)$ has Fourier transform   $\hat{x}(\omega)=e^{-ik\omega}\chi_J(\omega)$. The Fourier transform of $\rho*x$ is simply again $e^{-ik\omega}\chi_J(\omega)$. Hence the above sum is equal to $\rho(j-k)$, and we arrive at
\begin{equation}\label{norm_BTy}
\begin{split}
\ltwonorm{\Bfrak \Tfrak y}^2 & = \sum_{j=0}^{N-1} y(j)^* \sum_{k=0}^{N-1} y(k) \, \rho(j-k) \\
 & = \yb^* \Rb \yb,
\end{split}
\end{equation}
where $\yb= [\,y(0), y(1), \dots, y(N-1)\,]^\top$ is a slight abuse of notation, and
\begin{equation}\label{mat_R}
\Rb=\begin{bmatrix} \rho(0)&\rho(-1)&\cdots&\rho(-N+1)\\
\rho(1)&\rho(0)&\cdots& \rho(-N+2)\\
\vdots&\vdots&\ddots&\vdots\\
\rho(N-1)&\rho(N-2)&\cdots&\rho(0)
\end{bmatrix}.
\end{equation}
The matrix $\Rb$ has a Hermitian Toeplitz structure, and it is also positive definite because the quadratic form determines the energy of $\Bfrak \Tfrak y$. Notice that when the set $J$ is symmetric w.r.t.~the origin such as $S$ in \eqref{set_S}, then the integral in \eqref{bandpass_imp_resp} reduces to $\int_J \cos(t\omega) \d\omega$. In that case, $\rho$ is an even function of time, and the matrix $\Rb$ is real symmetric.

Now the objective functional in the energy concentration problem \eqref{energy_concen} is in fact equal to the Rayleigh quotient associated to $\Rb$. By the min-max theorem, the maximum of the objective is equal to the largest eigenvalue of $\Rb$, and it is attained when $\yb$ is  the corresponding eigenvector. It is obvious that the eigenvalues of $\Rb$ do not exceed $1$, simply because both $\Bfrak$ and $\Tfrak$ are projection operators.

\begin{remark}
Although it does not particularly interest us here, it is worth mentioning that the energy concentration problem \eqref{energy_concen} has a ``dual'' problem obtained by interchanging the two operators $\Bfrak$ and $\Tfrak$, namely
\begin{equation}\label{energy_concen_dual}
\sup_{y \in \ell^2} \frac{\ltwonorm{\Tfrak \Bfrak y}^2}{\ltwonorm{y}^2}.
\end{equation}
The problem \eqref{energy_concen_dual} is equivalent to determining the supremum of $\ltwonorm{\Tfrak y}^2$ over all band-limited signals subject to the constraint $\ltwonorm{y}=1$. By a standard variational argument using the Lagrange multiplier, one can conclude that the maximum of the dual objective is equal to the largest eigenvalue of a linear integral operator with a (modified) Dirichlet kernel. Moreover, following the lines in \cite[Section~5]{sengupta2012concentration}, it is not difficult to show that the eigenvalues of such an integral operator are identical to those of $\Rb$, and the corresponding eigenfunctions are related via the Fourier transform.
\end{remark}

\subsection{Asymptotic distribution of the eigenvalues}\label{subsec:asymp_distri_eigen}

We shall now allow the dimension of $\Rb$ to increase. In other words, the integer $N$ introduced by the set $I$ in \eqref{I_set} is considered as a variable tending to infiity. Let $\lambda_j(N)$ be the $j$-th eigenvalue (arranged in nonincreasing order) of $\Rb$. We know from the previous subsection that $0<\lambda_j(N)\leq 1$ for all $j=1,\dots,N$.
It also follows easily that
\begin{equation}\label{sum_eigen}
\sum_{j=1}^{N} \lambda_j(N) = \trace \Rb = N \rho(0) = \frac{\m(J)}{2\pi} N,
\end{equation}
where the notation $\m(\cdot)$ denotes the Lebesgue measure of a set.
Now for a real number $0<\gamma<1$, define $M(\gamma,N)$ to be the number of eigenvalues of $\Rb$ that are no less than $\gamma$. Again we have included the explicit dependence on the dimensional variable $N$. The next result is a first-order description of the asymptotic eigenvalue distribution of the matrix $\Rb$. The proof borrows techniques from \cite{landau1975szego} and can be found  in the appendix.

\begin{theorem}\label{thm_asymp_eigen}
It holds that
\begin{equation}\label{lim_asymp_eigen}
\lim_{N\to\infty} \frac{M(\gamma,N)}{N} = \frac{\m(J)}{2\pi} \,
\end{equation}
independent of $\gamma$.
\end{theorem}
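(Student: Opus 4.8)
The plan is to prove Theorem \ref{thm_asymp_eigen} by the classical Szeg\H{o}-type argument adapted to the time-limited bandpass setting, exactly along the lines of Landau's work \cite{landau1975szego}. The key idea is to study the traces of powers of $\Rb$ and to compare $\Rb$ with its ``square'' in a way that forces the eigenvalues to cluster near $0$ and $1$. First I would observe that $\Rb$ is the compression to the coordinate block $I$ of the bounded self-adjoint operator $\Bfrak$ on $\ell^2$, which is an orthogonal projection. Writing $P_I = \Tfrak$ for the projection onto sequences supported on $I$, we have $\Rb \cong P_I \Bfrak P_I$ restricted to $\range P_I$, so $0 \le \Rb \le I_N$ as already noted. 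The workhorse estimate is then
\begin{equation}\label{trace_comparison}
\trace(\Rb - \Rb^2) = \trace\big(P_I \Bfrak P_I \Bfrak^{\perp} P_I \Bfrak P_I\big) = o(N) \qquad (N\to\infty),
\end{equation}
i.e.\ the ``spread'' of the eigenvalues away from the idempotent set $\{0,1\}$ is sublinear in $N$. Granting \eqref{trace_comparison}, since each term $\lambda_j(N) - \lambda_j(N)^2 = \lambda_j(1-\lambda_j)$ is nonnegative and is bounded below by $\gamma(1-\gamma)$ whenever $\gamma \le \lambda_j \le 1-\gamma$, the number of eigenvalues lying in the interval $[\gamma, 1-\gamma]$ is $o(N)$ for every fixed $\gamma \in (0,1/2)$. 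Combining this with the trace identity \eqref{sum_eigen}, which pins the average eigenvalue to $\m(J)/2\pi$, a short counting argument then shows that the number of eigenvalues exceeding $\gamma$ satisfies $M(\gamma,N)/N \to \m(J)/2\pi$ independent of $\gamma$: the eigenvalues near $1$ must number asymptotically $\tfrac{\m(J)}{2\pi}N$ and those near $0$ the complementary fraction, with only $o(N)$ in between.

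The main obstacle, and the technical heart of the proof, is establishing \eqref{trace_comparison}. I would expand the trace as a double sum over $I\times I$ of $|\rho(j-k)|^2$-type quantities minus a triple convolution sum, and recognize $\trace(\Rb^2) = \sum_{j,k\in I}|\rho(j-k)|^2$ while $\trace(\Rb) = N\rho(0) = N\cdot \m(J)/2\pi$. Since $\rho \in \ell^2(\Zbb)$ with $\sum_{t\in\Zbb}|\rho(t)|^2 = \rho(0) = \m(J)/2\pi$ (Parseval, using $|\chi_J|^2 = \chi_J$), the difference
\[
\trace(\Rb - \Rb^2) = \sum_{j\in I}\Big(\rho(0) - \sum_{k\in I}|\rho(j-k)|^2\Big) = \sum_{j\in I}\sum_{m\notin I - j}|\rho(m)|^2
\]
is a sum over $j\in I$ of the tail mass of $|\rho|^2$ outside the shifted window $I-j$. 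For $j$ in the ``bulk'' of $I$ (say $C_N \le j \le N-1-C_N$ with $C_N\to\infty$, $C_N/N\to 0$) this tail is at most $2\sum_{|m|>C_N}|\rho(m)|^2 \to 0$, while for the $O(C_N)$ boundary indices one uses the crude bound that the tail is at most $\rho(0)$. Choosing $C_N$ growing slowly (e.g.\ $C_N = \lfloor \sqrt{N}\rfloor$) makes the boundary contribution $O(\sqrt{N}) = o(N)$ and the bulk contribution $o(N)$, giving \eqref{trace_comparison}. The only subtlety is the decay of $\rho$: because $J$ is a finite union of intervals, $\chi_J$ is of bounded variation, so $\rho(t) = O(1/|t|)$, which is not itself summable in square with a rate good enough to be cavalier about — but it is square-summable, and that is all the argument above needs, since we only require $\sum_{|m|>C_N}|\rho(m)|^2 \to 0$, which holds for any $\ell^2$ sequence.

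To be safe about the boundary term I would, rather than invoke only $\ell^2$ membership, note that $\sum_{|m|>C}|\rho(m)|^2 = O(1/C)$ from $\rho(m) = O(1/|m|)$, so the bulk contribution is $O(N/C_N)$ and the boundary contribution is $O(C_N)$; balancing gives $C_N \asymp \sqrt{N}$ and $\trace(\Rb-\Rb^2) = O(\sqrt{N})$, a quantitative version of \eqref{trace_comparison}. I will also remark that although I phrased everything for the general symmetric-about-origin set $J$ of interest here, the argument is purely about the compression of a spectral projection to a growing interval and does not use the bandpass structure beyond $\m(J)$ and $\rho\in\ell^2$; it is the discrete-time analogue of the Landau--Widom asymptotics \cite{Landau-W-80}, and reduces to the Slepian prolate case when $J=[-W,W]$. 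The finer second-order (Landau--Widom logarithmic) count of the transition eigenvalues is not needed for the first-order statement \eqref{lim_asymp_eigen} and I would not pursue it here.
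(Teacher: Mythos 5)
Your proof is correct and follows essentially the same route as the paper's: both arguments hinge on showing $\trace(\Rb-\Rb^2)=o(N)$, deducing that only $o(N)$ eigenvalues lie away from $\{0,1\}$, and then counting against the trace identity $\trace\Rb = N\,\m(J)/2\pi$. The paper packages the second-moment computation as a separate lemma, obtained by dominated convergence applied to $\frac{1}{N}\sum_{|j|<N}|\rho(j)|^2\left(N-|j|\right)$; your bulk/boundary tail-sum splitting is the same computation organized differently, with the small bonus of the quantitative rate $\trace(\Rb-\Rb^2)=O(\sqrt N)$ coming from $\rho(t)=O(1/|t|)$. One slip to fix: the operator identity in your first display is wrong --- with $P_I=\Tfrak$ and using $\Bfrak^2=\Bfrak$, $P_I^2=P_I$, the correct statement is $\trace(\Rb-\Rb^2)=\trace\bigl(P_I\Bfrak(I-P_I)\Bfrak P_I\bigr)$, i.e.\ the complementary \emph{time} projection sits between two copies of $\Bfrak$, whereas the expression $P_I\Bfrak P_I\Bfrak^{\perp}P_I\Bfrak P_I$ you wrote equals $\Rb^2-\Rb^3$, not $\Rb-\Rb^2$. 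This does not damage the argument, since your subsequent elementwise evaluation of $\trace(\Rb-\Rb^2)$ is self-contained and correct.
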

\medskip
A more precise formula for the asymptotic expansion of the quantity $M(\gamma,N)$ is given  in \cite{Landau-W-80} for the continuous-time case. The second term in the  asymptotic expansion   is shown to be proportional to $ \log N $.  Slepian's asymptotic expressions for the eigenvalues, valid for $\theta=0$, are also reported  in \cite[p. 1059]{Thomson-82}.
 Although we believe that analogous discrete-time   estimates should hold, a formal proof is yet to be worked out.	For our problem of frequency estimation, Corollary \ref{cor_rank_R} below is anyway sufficient.

By choosing $\gamma$ arbitrarily close to $1$, an immediate consequence of the above theorem and formula \eqref{sum_eigen} is the following.
\begin{corollary}\label{cor_rank_R}
For $N\to \infty$, the matrix $\Rb$   has rank 
\begin{equation}\label{rank}
n=N\m(J)/2\pi, 
\end{equation}
and all the nonzero eigenvalues tend  to $1$.
\end{corollary}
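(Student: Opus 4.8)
The plan is to read off the Corollary from Theorem~\ref{thm_asymp_eigen} together with the trace identity~\eqref{sum_eigen}, the point being that the two of them pin the eigenvalue ``histogram'' of $\Rb$ to a two-step staircase. Write $c:=\m(J)/2\pi$, so that the claimed rank is $n=cN$ up to rounding. I would start from the counting statement of Theorem~\ref{thm_asymp_eigen} applied with a threshold $\gamma$ close to $1$: it gives $M(\gamma,N)=cN+o(N)$, i.e.\ at least $n+o(N)$ of the eigenvalues are $\ge\gamma$, hence close to $1$ (recall $0<\lambda_j(N)\le 1$, established right after~\eqref{mat_R}). This already produces ``at least $\sim n$'' non-negligible eigenvalues; the remaining work is to show there are no more than $\sim n$ of them and that all the rest collapse to $0$.

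That is where~\eqref{sum_eigen} enters. Since $\sum_{j=1}^N\lambda_j(N)=cN=n$ and the $M(\gamma,N)$ eigenvalues above $\gamma$ already contribute at least $\gamma M(\gamma,N)=\gamma(cN+o(N))$ to this sum, the total mass carried by \emph{all} eigenvalues below $\gamma$ is at most $cN-\gamma(cN+o(N))=(1-\gamma)cN+o(N)$. Consequently, for any fixed level $\epsilon\in(0,1)$ the number of eigenvalues lying in the window $(\epsilon,\gamma)$, each exceeding $\epsilon$, is at most $[(1-\gamma)cN+o(N)]/\epsilon$, which can be made smaller than $\delta N$ for any prescribed $\delta>0$ by first choosing $\gamma$ close enough to $1$ and then taking $N$ large; so the number of eigenvalues in any fixed compact subinterval of $(0,1)$ is $o(N)$, and the mass they carry is $o(N)$ as well. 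Putting the pieces together: for every $\epsilon>0$ and $N$ large, $cN+o(N)=n+o(N)$ eigenvalues lie in $[1-\epsilon,1]$, only $o(N)$ of them lie in $(\epsilon,1-\epsilon)$, and the remaining $(1-c)N+o(N)$ lie in $[0,\epsilon]$. Letting $\epsilon\downarrow 0$ yields exactly the two assertions of the Corollary: the number of non-negligible eigenvalues — the numerical rank — is asymptotic to $n=N\m(J)/2\pi$ as in~\eqref{rank}, and every non-negligible eigenvalue tends to $1$ (equivalently, the empirical spectral distribution of $\Rb$ converges to a point mass of weight $c$ at $1$ plus a point mass of weight $1-c$ at $0$).

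The only point that deserves care — rather than a genuine obstacle — is what ``rank'' means here: $\Rb$ is strictly positive definite, so its algebraic rank is always $N$, and the statement must be read in the asymptotic/numerical sense above, namely that $M(\gamma,N)/N$ stabilises to the \emph{same} limit $c$ for every threshold $\gamma\in(0,1)$. The quantitative inputs one leans on are merely that the $o(N)$ error in Theorem~\ref{thm_asymp_eigen} is harmless since $\gamma,\epsilon,\delta$ are all fixed before $N\to\infty$, and that the eigenvalues are confined to $(0,1]$; no new machinery is needed, the Corollary being essentially a bookkeeping consequence of ``$M(\gamma,N)/N\to c$ for all $\gamma$'' together with ``$\tfrac1N\trace\Rb\to c$''. (A genuinely \emph{pointwise} statement — that every eigenvalue indexed below $n$ is within $\epsilon$ of $1$, with a transition region of width $O(\log N)$ — would instead require the finer second-order estimate alluded to right after Theorem~\ref{thm_asymp_eigen}, which this paper does not pursue.)
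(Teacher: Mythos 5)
Your argument is correct and follows exactly the route the paper intends: the Corollary is stated there as an immediate consequence of Theorem~\ref{thm_asymp_eigen} and the trace identity~\eqref{sum_eigen} upon taking $\gamma$ close to $1$, and your write-up simply makes that bookkeeping explicit (your remark that ``rank'' must be read in the numerical/asymptotic sense, $\Rb$ being positive definite, is a point the paper leaves implicit but does not contradict). The only mild redundancy is your trace-based bound on the number of eigenvalues in $(\epsilon,\gamma)$: since Theorem~\ref{thm_asymp_eigen} already gives $M(\gamma,N)/N\to \m(J)/2\pi$ for \emph{every} $\gamma$, the count in any window is $M(\epsilon,N)-M(\gamma,N)=o(N)$ directly.
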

The convergence is very fast since, as it is shown in  the proof of the Theorem, the matrix $\Rb$ has only $o( N)$ eigenvalues that are between $0$ and $1$, and for large sample size they can be reasonably neglected.

For $\nu =1$ the covariance matrix in \eqref{Cov_matK} is just a scalar multiple of $\Rb$ via $\Kb_N=\frac{\pi\sigma^2}{2W}\Rb_N$ (here for notational consistency we have added the subscript $_N$ to $\Rb$). Clearly, the constant factor only rescales the eigenvalues. In particular, the assertion on the rank in Corollary \ref{cor_rank_R} holds for $\Kb$. Below  we show some simulations of how the eigenvalues decay.

Fig.~\ref{fig:c40s} shows the behavior of the eigenvalues $\mu_k$  of the sinc kernel for   $N=1000, W/2\pi=0.02, \sigma^2=1$ which yields a rank approximately equal to $40$. 
We can clearly see  that for $n<40$ the eigenvalues are all equal to the same  constant   while for $n > 40$   the $\mu_k$'s very quickly decrease to  zero.

The behavior of the eigenvalues of $\Rb$  is the same  except  that the normalization    makes  the $\mu_k$  all practically  equal to one  for $k < n$. In order to  get the same normalization we just need to substitute $\mu_k$ with $2W \mu_k/2\pi$.

As for the modulated sinc kernel,  Fig.  \ref{fig:c80c} shows the eigenvalues of   a matrix $\Kb$ with the same values of $N$, $W$, and $\sigma^2$.
One sees that the eigenvalues have exactly the same behavior as those of the Sinc kernel.   Only  the value of $n$ such that for $k> n $, $\mu_k \simeq  0$ is now $4NW/2\pi=80$, i.e.,  twice the value of $n$ for the sinc kernel. Moreover, the amplitudes of  the eigenvalues   for $k < n$ are half of those of  the sinc kernel,  for equal values of $W$. This follows the from the symmetry of the spectrum and matches also the experimental findings of \cite{Khare-06}.

In order to get the largest eigenvalues of the modulated sinc kernel equal to one, a different normalization should be  made by substituting $\mu_k$ with $4W \mu_k/2\pi$. This   agrees with the matrix rescaling described above.

\begin{figure}[!h] 
\begin{centering}
  \includegraphics[width=0.45\textwidth]{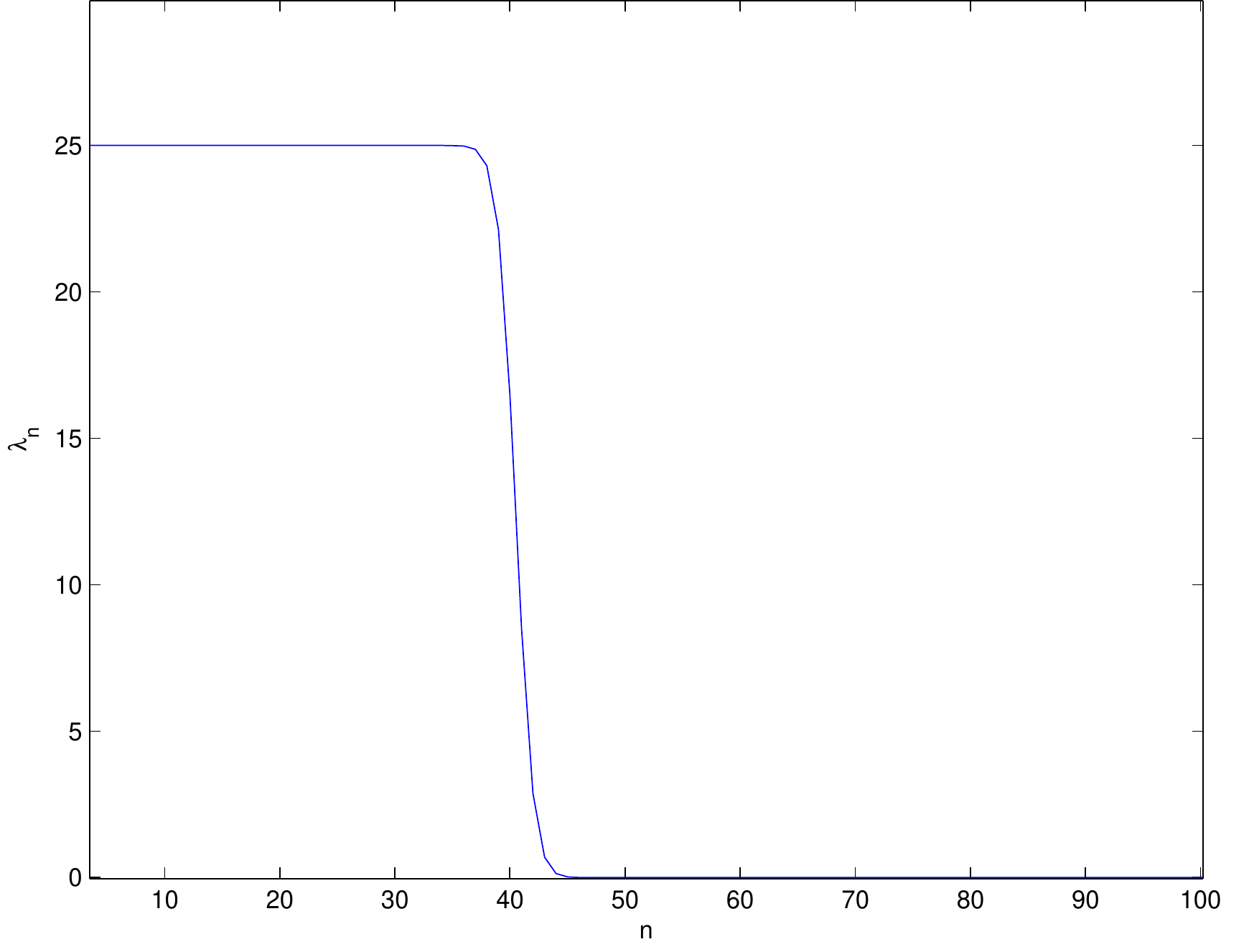} 
\caption{\em Eigenvalues of the sinc kernel covariance matrix, rank$\approx40$}
\label{fig:c40s}
\end{centering}
\end{figure}

\begin{figure}[!h]
\begin{centering}
  \includegraphics[width=0.45\textwidth]{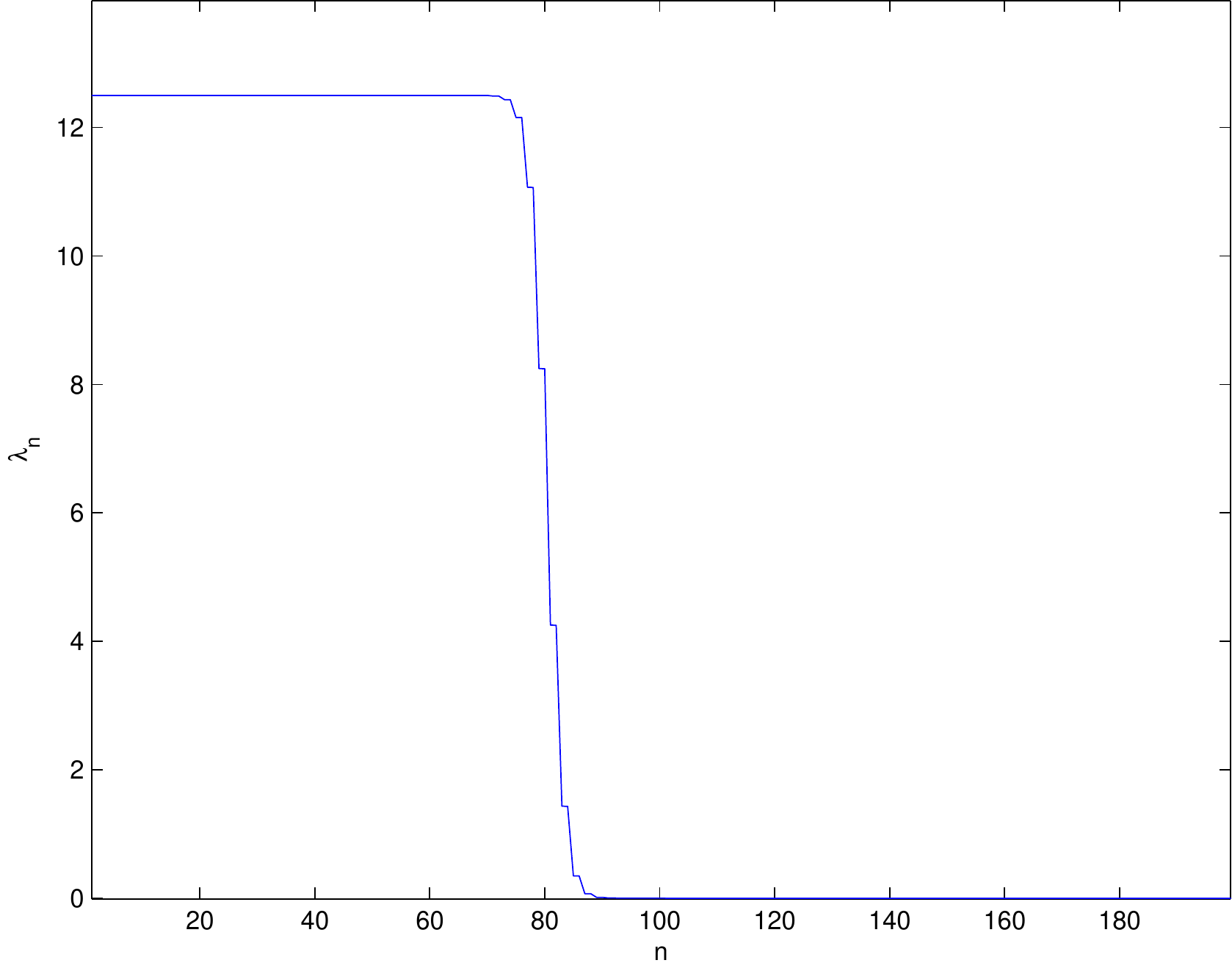}
\caption{\em Eigenvalues of the modulated sinc kernel covariance matrix, rank$\approx80$}
  \label{fig:c80c}
\end{centering}
\end{figure}

\section{Covariance Estimation}\label{sec:cov_est}

In the case of one hidden frequency, we have
$\rank \Kb_N\approx\frac{2W}{\pi}N$ according to Corollary \ref{cor_rank_R}. We can see that the bandwidth $W$ can be inferred from the rank information of the signal covariance matrix $\Kb$.
Since our measurements come from the process $\yb$, we start by estimating its covariance matrix $\Sigmab$.

A well-known difficulty in frequency estimation is that   stationary random processes with periodic components, even when the frequencies are  exactly known,  are not ergodic. Non-ergodicity means in particular that,  when the sample size   goes to infinity, the  limit of     the  process sample covariance is {\em sample dependent}, that is, the limit sample covariance  depends on the random amplitudes of its elementary oscillatory components (see e.g., \cite[pp. 105-109]{Soderstrom-S-89}). This lack of ergodicity is   even more serious when the frequency is random. For this reason, one-sample-path estimation runs into difficulty and  the standard  approach in  many practical situations is to consider estimation from cross-sectional or {\em panel data} (also called {\em snapshots}), as described in e.g., \cite{Hsiao-14} and e.g. done in DOA estimation.  Cross-sectional frequency data  can be the result of parallel measurements by multiple sensors  which is quite common for example in testing of turbo, and in general rotating machines, but also in many directional signal processing    and biomedical applications.

For the reasons above, we shall need to assume that our observed data consist of $L$ strings of sample  observations (snapshots), assumed for simplicity all of length $N$  :
\begin{equation}\label{y_meas_crs_sec}
y_k(t) = a_k \cos (\omega_k t) + b_k \sin (\omega_k t) + w_k(t),
\end{equation}
where $k=1,\ldots,L$, $t=1,\dots,N$, $(a_k,\, b_k)$ are sample determinations of the random variables $(\ab,\,\bb)$, and the frequencies $\omega_k$ are   sample determinations of the random variable $\omegab$ which is uniformly distributed  on the  fixed  interval $[\theta-W,\; \theta+W]$. We assume that noises of different cross sections are independent. Furthermore, we assume that the random samples $[a_k,b_k,\omega_k]$ come from  i.i.d.~copies of $[\ab,\bb,\omegab]$, then  the covariance matrix can be estimated by first subtracting the sample mean from the data, i.e.  
$$
\tilde y_k(t):= y_k(t) - \frac{1}{N} \sum_{t=1}^N y_k(t)
$$
and then doing a cross-sectional average
\begin{equation}\label{Sigma_hat}
\hat{\Sigmab}_{N,L} := \frac{1}{L} \sum_{k=1}^{L} \Ycal_k \Ycal_k^\top,
\end{equation}
where $\Ycal_k = \bmat \tilde y_k(1) & \cdots & \tilde y_k(N) \emat^\top$ is a column $N$-vector of centered data. The procedure is asymptotically equivalent (for $L\to \infty$) to first computing the standard (biased) covariance estimator  within each sample path \cite[Chapter~2]{Stoica-M-05},
$$
\hat{\sigma}_k(\tau) := \Frac{1}{N-\tau}\sum_{t=1}^{N-\tau} \tilde y_k(t+\tau) \tilde y_k(t),
$$
constructing the sample Toeplitz estimate
\begin{equation}\label{EstToepl}
\hat{\Sigma}_k :=  \symtoep \{\hat{\sigma}_k(0), \ldots,\hat{\sigma}_k(N-1)\}
\end{equation}
and then doing cross sectional average w.r.t.~$k$ to obtain  $\hat{\Sigmab}_{N,L} $ which is still symmetric-Toeplitz (here the subscript $N$ just refers to the dimension which is fixed).
By the strong law of large numbers, we have
\begin{equation}\label{converge_Sigma_hat}
\hat{\Sigmab}_{N,L} \to \Sigmab_{N} \text{ as } L\to\infty
\end{equation}
almost surely. Let $\hat{\lambda}_N$ be the smallest eigenvalue of $\hat{\Sigmab}_{N}$. Then given \eqref{Cov_matSigma} and Theorem~\ref{thm_asymp_eigen}, we have 
\begin{equation}\label{converge_min_eig}
\lim_{L,N\to\infty} \hat{\lambda}_N = \sigma^2_\wb.
\end{equation}
The limit here and those similar ones in the following are understood as first letting $L\to \infty$ and then $N\to\infty$.
In this sense  we are able to build a consistent estimator of the signal covariance matrix $\Kb_{N}$:
\begin{equation}
\hat{\Kb}_{N} := \hat{\Sigmab}_{N} - \hat{\lambda}_N I_N,
\end{equation}
 A consistent  estimator of the signal variance is given by
\begin{equation}
\hat{\sigma}_{\xb}^2:=\hat{\sigma}_{\yb}(0)-\hat{\lambda}_N,
\end{equation}
since we have $\sigma_{\yb}(0)=\sigma_{\xb}^2+\sigma_{\wb}^2$ by \eqref{OutCov}.

 Next, for $\varepsilon>0$ close to zero, the \emph{numerical rank} of $\Kb_{N}$ can be estimated using Theorem~\ref{thm_asymp_eigen} as
\begin{equation}
\rank(\Kb_{N})\simeq  M(\varepsilon,N) = \frac{2W}{\pi}N 
\end{equation}
with an approximation error which  roughly grows as $O(\log N)$.
In particular we have
	\begin{equation}\label{rank_est_W}
	\hat{W} := \frac{\pi}{2} \frac{\rank(\Kb_N)}{N} \to W
	\end{equation}
	when $N$ is large.
Unfortunately this  estimator of $W$ depends heavily on the estimate of the numerical rank whose computation is delicate and is not very reliable unless $N$ is very large. We shall comment on this  in the next subsection.

The relation for the scalar covariance of lag $1$
\begin{equation}
\sigma_{\xb}(1) = \sigma^2 \cos \theta \, \frac{\sin W}{W},
\end{equation}
could then be used to get a rough estimate of the center frequency:
\begin{equation}\label{est_f0}
\hat{\theta} := \arccos \left( \frac{\hat{\sigma}_{\xb}(1)}{\hat{\sigma}^2} \, \frac{\hat{W}}{\sin \hat{W}}\right),
\end{equation}
where $\hat{\sigma}_{\xb}(1)$ is an estimator of $\sigma_{\xb}(1)=\E \xb(t+1)\xb(t)$.

In the next section we shall describe a   more general reliable estimator based on the subspace  philosophy.

\begin{remark}
	The independence of  the cross sections, although often assumed in the literature, may seem quite strong. A more natural assumption could be  to require that the strings \eqref{y_meas_crs_sec} are sample  observations  of length $N$  from an {\em exchangeable} sequence of $N$-dimensional random vectors, $\{\yb_k\}_{k=1}^{L}$. For reasons of space this 
alternative viewpoint  will not be further pursued here.
\end{remark}


\section{A Subspace Approach to Hyperparameter Estimation}\label{Subspace}

Consider now the general measurement model \eqref{y_measurement}, with the signal $\xb$ consisting of multiple sinusoids as in \eqref{x_multi_sinu} satisfying all assumptions listed  in Sec.~\ref{sec:prob}. For simplicity, we shall assume that the amplitude variances are the same, $\sigma_1^2= \dots=\sigma_{\nu}^2= \sigma^2$. The covariance of $\yb$ can then  be computed similarly to that in Sec.~\ref{sec:prob}. We have 
\begin{equation}\label{Sigma_multi_freq}
\begin{split}
\Sigma(\tau) & = K(\tau) + \sigma_{\wb}^2 \delta(\tau,0) \\
 & = \sum_{\ell=1}^{\nu} \sigma^2 \, \E \left( \cos\omegab_\ell \tau \right) + \sigma_{\wb}^2 \delta(\tau,0) \\
 & = \sigma^2 \frac{\sin W\tau}{W\tau}\sum_{\ell=1}^{\nu} \cos\theta_\ell\tau + \sigma_{\wb}^2 \delta(\tau,0) \\
 & = \frac{\pi\sigma^2}{2W} \int_{-\pi}^{\pi} e^{i\omega \tau} \sum_{\ell=1}^{\nu} \chi_{S_\ell}(\omega) \frac{\d\omega}{2\pi} + \sigma_{\wb}^2 \delta(\tau,0). \\
\end{split}
\end{equation}
Under the  assumptions listed in Sec.~\ref{sec:prob}, the sum $\sum_{\ell=1}^{\nu} \chi_{S_\ell}(\omega)$ is the indicator function on the set $S:=\bigcup_{\ell=1}^{\nu} S_\ell$. From the integral expression for the covariance function, we see immediately that Corollary \ref{cor_rank_R} is applicable, and the asymptotic rank of $\Kb_N$ is now $\frac{2\nu W}{\pi}N$. A rank estimator for the bandwidth $W$ similar to \eqref{rank_est_W} can be used  since we have assumed that the supporting intervals for different frequencies have the same bandwidth. A more general situation with different $W$'s can also be dealt with but  it yields complicated  formulas and will not be discussed here. Next, we will concentrate on the estimation of the center frequency vector $\theta:=[\theta_1,\dots,\theta_\nu]^{\top}$.

\begin{remark}
	When the amplitudes $\sigma_1^2,\dots,\sigma_{\nu}^2$ are different, the spectral density of our signal is a sum of nonoverlapping rectangular functions and can always be written as a weighted sum of indicator functions. The assertion on the rank in Corollary~\ref{cor_rank_R} must still hold and a proof could be given based on Szeg{\"o}'s eingenvalue distribution theorem for Toeplitz matrices (see e.g.,\cite{gray2006toeplitz}).
\end{remark}

Efficient estimation of the hyperparameters can  be based on  maximum likelihood, assuming  Gaussian additive noise. See \cite[p. 429]{MacKay-92} for a general discussion of this point. The Gaussian likelihood function based on the $k$-th snapshot of  $N$ data can be written as (cf.~\cite{Hannan-D-88})
\begin{equation}
\begin{split}
l_k(\theta,W)= -\frac{N}{2}\log 2\pi - \frac{1}{2}\log \det \Sigmab(\theta, W) \\ -\frac{1}{2} \Ycal_k^\top \Sigmab(\theta, W)^{-1} \Ycal_k,  
\end{split}
\end{equation}
where $\Ycal_k$ is the vector introduced in \eqref{Sigma_hat}, $ \Sigmab(\theta, W) $ is the theoretical covariance matrix of $\Ycal_k$, with entries given in \eqref{Sigma_multi_freq} which do not depend on the index $k$. The first constant can be dropped from the objective function. By the independence of the sample paths, the log-likelihoods add to each other so that we end up with maximization of the function
\begin{equation}
l(\theta, W)= - \frac{L}{2}\log \det \Sigmab(\theta, W) -\sum_{k=1}^{L} \frac{1}{2} \Ycal_k^\top \Sigmab(\theta, W)^{-1} \Ycal_k
\end{equation}
with respect to  $\theta, W$. This leads to the well-know unique maximizer, see e.g. \cite[pp.~202--203]{Soderstrom-S-89}, for the covariance matrix
\begin{equation}\label{MomentEq}
\Sigmab(\theta, W)= \hat \Sigmab_{N,L}
\end{equation}
where $\hat \Sigmab_{N,L}$ is  defined in \eqref{EstToepl}.
Such an equation should be solved for the unknown hyperparameters $(\theta, W)$ appearing in the known structure \eqref{Sigma_multi_freq}. Note that this equation can be interpreted as resulting from the well-known  {\em method of moments} which is the theoretical basis of Subspace Methods \cite[Chapt. 13]{LPBook}.
Since the equation is nonlinear, one may think of setting up at the outset an iterative solution scheme. However, these numerical algorithms very often converge only locally. In fact, the likelihood function is nonconvex and contains many flat regions. Therefore, brute-force optimization seems to be a hard task.


We shall instead take advantage of the structure of the equation \eqref{MomentEq} to propose a {\em subspace-based} approach.
For a fixed and large enough $N$, we may and shall  here assume that the $N\times N$ covariance matrix $\Kb_N$ of the process $\xb$ has exactly rank $n:=\frac{2\nu W}{\pi}N$. As discussed in Subsection~\ref{subsec:asymp_distri_eigen}, for $N$ large this is a quite accurate approximation. In other words, we do a truncation in the spectral decomposition of the matrix $\Kb_N$, retaining the largest $n$ eigenvalues, namely
\begin{align}
\Kb_N=\Frac{\pi\sigma^2}{2W}\Rb_N =\Frac{\pi\sigma^2}{2W}\Qb_N \Db_N \Qb_N^\top \notag\\
\approx\Frac{\pi\sigma^2}{2W}\Qb_N \diag\{I_n,O_{N-n}\} \Qb_N^\top, \label{K_truncated}
\end{align}
where $O_m$ denotes the square all-zero matrix of size $m$. As before, the eigenvalues in the diagonal matrix $\Db_N$ are arranged in nonincreasing order.

\begin{proposition}
	For $N$ large enough, there are an  $n\times n$ matrix $A$ and an $n$-dimensional row vector    $c$ such that the random oscillatory signal $\xb$ can be represented by the system
	\begin{align}
	\xib(t+1)& = A \xib(t)   \label{stateeq} \\
	\xb(t) & = c\,\xib(t)   \label{outpeq}
	\end{align}
	where $\xib(t)= [\,\xi_1(t),\,\xi_2(t),\,\ldots, \xi_n(t)\,]^{\top}$ is an $n$-dimensional basis vector spanning the Hilbert space $\Hb(\xb)$ linearly generated by the $N$ random variables of the set $\{\xb(s)\,:\, t\geq s\geq t-N+1\} $. 
\end{proposition}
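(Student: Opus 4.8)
The plan is to read \eqref{stateeq}--\eqref{outpeq} off the finite-rank property of the covariance proved in Corollary~\ref{cor_rank_R}, combined with the shift-invariance built into stationarity: this is essentially a stochastic realization / subspace-identification argument specialized to the (idealized) purely deterministic rank-$n$ case.

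First I would spell out the consequence of the standing hypothesis that $\Kb_N$ has rank exactly $n=2\nu W N/\pi$: the Gram matrix of the $N$ random variables $\{\xb(s):t\ge s\ge t-N+1\}$ has rank $n$, so the space $\Hb(\xb)$ they span is $n$-dimensional, and any $n$ of them that are linearly independent form a basis. I would then take the ``internal'' basis $\xi_j(t):=\xb(t-j+1)$, $j=1,\dots,n$, which is legitimate because a rank-$n$ positive semidefinite Toeplitz matrix is, by the Carath\'{e}odory--Fej\'{e}r/Pisarenko representation, the covariance of a superposition of about $n/2$ real sinusoids with distinct frequencies, for which any $n$ consecutive samples are linearly independent (a Vandermonde argument); in particular the leading $n\times n$ block of $\Kb_N$, which is the state covariance, is nonsingular. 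With this basis \eqref{outpeq} holds with $c=[1,0,\dots,0]$. For \eqref{stateeq}, note $\xi_j(t+1)=\xb(t-j+2)=\xi_{j-1}(t)$ for $j\ge 2$, so only $\xi_1(t+1)=\xb(t+1)$ must be expanded; since $\xb(t+1)$ still lies in $\Hb(\xb)$ (the variables $\xb(t+1),\xb(t),\dots,\xb(t-N+1)$ still span that same $n$-dimensional space), there are scalars $a_1,\dots,a_n$ with $\xb(t+1)=\sum_{j=1}^n a_j\xi_j(t)$, and $A$ is the companion matrix with first row $[a_1,\dots,a_n]$ and ones on the subdiagonal; observability of $(c,A)$ is automatic for a companion pair, so the representation is also minimal. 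An equivalent route, which is the one actually exploited by the algorithm of the next section, is to factor the truncated covariance in \eqref{K_truncated} as $\Kb_N=\Ob\Ob^\top$ with $\Ob$ an $N\times n$ full-column-rank ``observability'' matrix, recover $A$ from the shift relation $\Ob^{\downarrow}=\Ob^{\uparrow}A$ via $A=(\Ob^{\uparrow})^{\dagger}\Ob^{\downarrow}$ and $c$ from the first row of $\Ob$; here $\Ob^{\uparrow}$, $\Ob^{\downarrow}$ denote $\Ob$ with its last, respectively first, row removed.

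The step I expect to be the main obstacle is the self-consistency of the ``exact rank $n$'' hypothesis with the stationarity being exploited: the genuine modulated-sinc Toeplitz matrix $\Kb_N$ has full rank $N$ for every finite $N$, and ``rank $n$'' refers to the spectral truncation \eqref{K_truncated}, which is no longer Toeplitz. Thus the construction above should really be read as applying to a bona fide rank-$n$ \emph{stationary surrogate} of $\xb$ --- the $n$-mode process carried by the dominant eigenspace --- rather than to $\xb$ itself, and two points need care: that such a surrogate exists and is a legitimate rank-$n$ stationary process whose length-$N$ window reproduces $\Kb_N$ up to the truncation error (this is where the Pisarenko structure and the distinctness of the modal frequencies enter), and that the numerical rank does not drift between window lengths $N$ and $N+1$, which is precisely the $O(\log N)$ discrepancy quantified in Theorem~\ref{thm_asymp_eigen} and its surrounding discussion and is negligible for $N$ large --- hence the qualifier ``for $N$ large enough'' in the statement. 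Once these approximation issues are absorbed, everything else is routine linear algebra.
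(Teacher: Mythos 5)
Your proposal is correct, but it argues along a genuinely different line from the paper. The paper's own proof is essentially a two-sentence appeal to general stochastic realization theory: a rank-deficient ($\rank = n$) covariance is necessarily that of a purely deterministic process, which therefore admits an order-$n$ linear recursion, i.e.\ a state-space model of the form \eqref{stateeq}--\eqref{outpeq}, with the added observation that $A$ may be taken orthogonal. You instead build the realization explicitly: Carath\'eodory--Fej\'er/Pisarenko gives the sinusoidal structure of a rank-$n$ PSD Toeplitz matrix, a Vandermonde argument gives nonsingularity of the leading $n\times n$ block so that $n$ consecutive samples form a basis of $\Hb(\xb)$, and the companion pair $(c,A)$ then follows; your alternative factorization route $\Kb_N=\Ob\Ob^\top$ with the shift relation is precisely what the paper deploys later as the \emph{algorithm}, not as the proof. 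What your version buys is concreteness (an explicit minimal, observable realization) and, more importantly, an honest treatment of the point the paper's proof glosses over: the true modulated-sinc matrix \eqref{Cov_matK} has full rank for every finite $N$ (indeed a purely deterministic process with absolutely continuous spectrum admits no exact finite-order recursion), so ``rank $n$'' can only refer to the non-Toeplitz truncation \eqref{K_truncated}, and the proposition really concerns a rank-$n$ stationary surrogate whose legitimacy, and the stability of the numerical rank across window lengths via Theorem~\ref{thm_asymp_eigen} and Corollary~\ref{cor_rank_R}, you rightly flag as the steps needing care. What the paper's shorter route buys is that it works in any basis at once (hence the freedom to choose $A$ orthogonal, which the subsequent Procrustes step relies on), whereas your companion-form $A$ is not orthogonal and would need a further similarity transformation to match the paper's normalization.
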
 
\begin{proof}
	It is well-known that  a rank-deficient covariance matrix (of rank $n$) must necessarily be the covariance of a   purely deterministic process \cite[p. 138, 276]{LPBook}. When the total support of the spectrum $S=\bigcup_{\ell=1}^{\nu} S_\ell$ is a proper subset of $[-\pi,\pi]$, $\xb$ in \eqref{x_multi_sinu} is a purely deterministic process which can be represented by a deterministic linear recursion of order $n$ or equivalently, by a $n$-dimensional state-space model. Any such state-space representation for the process $\xb$ is of the form \eqref{stateeq}, \eqref{outpeq} 
	where $A$ can be chosen orthogonal so that  $A^{\top}=A^{-1}$.
\end{proof}

The output of \eqref{outpeq} has the expression $\xb(t) = cA^t \xib(0)$, from which we can compute the covariance function of the process as
$$
\sigma(t-s)= cA^t\E  \xib(0)\xib(0)^{\top} (A^{\top})^s c^{\top}= cA^t P\, A^{-s} c^{\top}.
$$
The matrix $P:=\E\xib(0)\xib(0)^{\top}$ satisfies a degenerate Lyapunov equation and commutes with $A$. Therefore, we have $\sigma(\tau) = cPA^{\tau} c^{\top}.$ The spectral density of $\xb$ is a sum of Dirac deltas. To see this, we first notice that since $A$ is orthogonal, its spectral decomposition can be written $A=T \Lambda T^*$  where $T$ is unitary and $\Lambda=\diag\{e^{i\varphi_1},\dots,e^{i\varphi_n}\}$ is a diagonal matrix of eigenvalues all having modulus $1$. The eigenvalues should come in conjugate pairs $e^{\pm i\varphi}$ if $\varphi\neq0,\pi$ due to the realness of $A$. The spectrum of the output process now follows:
\begin{equation}\label{spec_out_proc}
\begin{split}
\Phi_{\xb}(\omega) & = cP \Fcal(A^\tau) c^\top   = cPT \Fcal(\Lambda^\tau) T^*c^\top \\
 & = 2\pi\, cPT \diag\{ \delta(\omega-\varphi_1),\dots,\delta(\omega-
 \varphi_n)\} T^*c^\top, \\
\end{split}
\end{equation}
where the weights for the Dirac deltas are determined by the vectors $cPT$ and $T^*c^\top$. See also \cite[Eq.~(8.129)]{LPBook}.

Since the state-space realization will be constructed from the truncated covariance matrix \eqref{K_truncated}, its spectrum should approximate the true one, i.e., the indicator function on $S$  times a constant factor, in the sense that the supports of the Dirac deltas should be clustered in $S$. The center of each cluster, namely the average of the arguments $\varphi_k$ inside one cluster, is an estimate of the center frequency. Such an idea is also justified by the fact that the (approximate) eigenvalues of $\Kb_N$ do not depend on the center frequencies $\theta$. Hence the whole dependence on $\theta$ must be in $c$ and $A$.



Now the remaining point is how to obtain the parameters $c,\,A$ in the realization from the measurements of $\yb$. First, we estimate the rank of $\Kb_N$ using the technique in Subsection~\ref{subsec:sim_emp}.
Secondly, one can easily verify that the finite covariance matrix of $\xb$ in \eqref{outpeq} can be written as $\Kb_N = H_N P  H_N^\top$, where
\begin{equation}\label{mat_Hk}
H_k=\bmat c\\cA\\ \vdots\\cA^{k-1}\emat
\end{equation}
for a positive integer $k$.
This is in fact a rank $n$ factorization of $\Kb_N$. Notice that such a factorization is unique modulo the choice of basis in the state space and one can  always choose a basis such that $P$ is a diagonal matrix. In that case, we can compare with \eqref{K_truncated} and choose $H_N=\Qb_N(1:N,1:n)$ and $P$   just a constant multiple of the identity.
Thirdly, due to additive structure of the covariance matrix \eqref{Cov_matSigma}, $\Sigmab_N$ has the same eigenvectors as $\Kb_{N}$. Therefore, we can estimate the covariance matrix $\Sigmab_{N}$ using the scheme \eqref{EstToepl}, and extract the eigenvectors corresponding to the largest $n$ eigenvalues to compose $H_N$. Notice also that the variances of the signal and the noise do not affect $H_N$, and thus do not play a role in the later estimation.

The vector $c$ is simply the first row of $H_N$.
The matrix $A$ can be computed by a standard ``shift-invariance'' procedure of subspace identification. More precisely, for $k\leq N-1$, consider the matrix $H_k$ in \eqref{mat_Hk} and its one row shifted counterpart $\downarrow H_k:=H_N(2:k+1,:)$.
The dynamic matrix $A$ can be extracted by solving  the equation $\downarrow H_k =  H_k A$ in a least-squares sense. When $A$ is constrained to be orthogonal, this is the well-known ``orthogonal Procrustes problem''. In \cite[Subsec.~6.4.1]{Golub-VL}, it is reported that such a problem is well-posed, and can be solved using SVD.
A similar subspace method for oscillatory signals was proposed in \cite{Favaro-P-12a,Favaro-P-15}.

Given the cross sectional measurements \eqref{y_meas_crs_sec} of size $L\times N$, we summarize our algorithm below:

\begin{enumerate}
	\item Compute $\hat{\Sigmab}_N$, an estimate of the covariance matrix of $\yb$, using \eqref{EstToepl};
	\item Estimate the rank $n$ of the signal covariance matrix $\Kb_{N}$,
	and then estimate the bandwidth $W$ by \eqref{rank_est_W};
	\item Do eigen-decomposition to $\hat{\Sigmab}_N$, keep the largest $n$ eigenvalues, and call the $N\times n$ matrix of corresponding eigenvectors $H_N$;
	\item Let $k=N-1$, and solve the orthogonal Procrustes problem $\downarrow H_k =  H_k A$ for the orthogonal matrix $A$;
	\item Compute the eigenvalues of $A$, and extract their phase angles (between $-\pi$ and $\pi$);
	\item Run a clustering algorithm, e.g., $k$-means, on the phase angles, and take the centers of final clusters as estimates of the center frequencies.
\end{enumerate}
In the last step of this subspace algorithm, the center of each cluster may be obtained by simply taking the average of all the points in the cluster. This yields the estimate
\begin{equation}
\hat \theta_{\ell} = \Frac{1}{n_{\ell}}\,\sum_{k=1}^{n_{\ell}} \varphi_{k,\ell}\qquad \ell=1,\ldots,\nu
\end{equation}
where $n_{\ell}$ is the  number of  phase points in each cluster of positive phases.

\section{Consistency}\label{Consist}
Subspace methods for finite-dimensional models are essentially an instance of the method of
moments which is well-known in Statistics to be generically consistent under very mild assumptions. However, here the true  covariance
matrix is infinite-dimensional and the basic consistency analysis of moment estimation for
finitely parametrized models does not apply. In order to completely answer the convergence
question of the subspace-based estimator to the true frequency hyperparameter, one should
then combine the consistency property of subspace estimates which holds for the estimate of each finite dimensional approximate linear model (of fixed dimension), with the convergence, as the dimension of the covariance truncation tends to infinity, of the purely deterministic approximate process described previously to the a posteriori process which has a continuous spectrum. This  is a rather technical issue essentially centering on  symmetric Toeplitz spectral approximation which could not be   reported in this paper and is treated in a companion publication \cite{Picci-Z-20}. 

Consistency follows from a result of \cite{Picci-Z-20}  which establishes convergence (understood in a  weak sense) of the   line spectrum of the approximate model \eqref{stateeq},\eqref{outpeq}    to the continuous spectrum of the infinite Toeplitz covariance matrix. This  implies in particular that both the width and the centers of the discrete frequency  clusters must converge to the width and center of the corresponding intervals supporting the continuous spectrum which are indeed the true frequency hyperparameters. 

\section{Bayesian Estimation}  \label{Bayes}

Assume now that we have a consistent estimate  of the parameters of the prior, in particular of the center frequencies $\theta_{\ell}$. The question is what this estimate has to do with (say) the Bayesian Maximum A Posteriori (MAP) estimate\footnote{MAP is  known to be the best estimate  in a variety of norms.} of the random angular  frequency $\omegab$,    computed from  the relative posterior distribution.  Is there any reason why the MAP estimate should coincide, at least asymptotically,  with the center  frequencies of the prior?

  In the Subsection \ref{subsec:sim_Bayesian}
we  shall provide experimental evidence that    in our setting the inherent optimization problem leads to a MAP estimate of $\omegab$ which is  practically indistinguishable from the Empirical Bayes estimate of the center frequency $\theta$. This fact is verified experimentally but should be also evident from the theoretical analysis which follows.

The  MAP estimator of $\omegab$ is obtained by maximizing the $\log$ of the unnormalized posterior distribution of $\omegab$  given $N$ observations\footnote{The estimation from multiple snapshots data can be dealt with in a similar way even in case of  unequal measurement error variances.} $\yb:= \bmat y(t) & \ldots&y(t-N)\emat^{\top}$,
neglecting the denominator $p(y)$ which does not depend on the parameters. The prior for one frequency is 
$$
p(\omega\mid \theta_{\ell},W) = \frac{1}{2W}\,\chi_{[\theta_{\ell}- W,\;\theta_{\ell}+W]}
$$
and since the intervals do not overlap we have independence and  the overall prior of $\omegab$ is the product of the priors for each $\omega_{\ell}$ so that, recalling that the noise is Gaussian i.i.d. we have 
$$
\hat \omega^{\MAP}\!\!\!\!= \!\argmax_{\omega\in[0,\pi]^\nu}\left \{\! -\frac{1}{2\sigma_{\wb}^2}\, \| \yb\! - \!V(\omega ) \ub\|^2 +\!\!  \sum_{\ell} \log p(\omega\mid \theta_{\ell},W)\!  \right\}
$$
with $V(\omega) =  \bmat C(\omega) &S(\omega)\emat$ where 
\begin{align*}
C(\omega) &=  \bmat \cos \omega_1 &\ldots &\cos \omega_\nu \\ 
\vdots & \ddots& \vdots\\
\cos \omega_1N &  \ldots &\cos \omega_\nu N \emat := \bmat \cb_1(\omega_1)& \ldots&\cb_\nu(\omega_\nu)\emat  \\
S(\omega)& =  \bmat \sin \omega_1& \ldots & \sin \omega_\nu\\ \vdots & \ddots& \vdots\\
\sin \omega_1N& \ldots&\sin \omega_{\nu}N \emat := \bmat \sbf_1(\omega_1)& \ldots&\sbf_\nu(\omega_\nu)\emat 
\end{align*}
and $\ub=  \bmat a_1 & \ldots & a_{\nu}& b_1 &\ldots & b_{\nu} \emat^{\top}\,:= \bmat \ab &\bb\emat^{\top}$
which  could also  be written in complex form as $ \Re[\tilde V(\omega) \tilde\ub]$ where $ \tilde V(\omega)$ is the van der Monde matrix
$$
\tilde V(\omega)=\bmat e^{j \omega_1} &   \ldots &e^{j \omega_{\nu}}\\
\vdots & \ddots& \vdots \\
e^{j N\omega_1} &   \ldots &e^{j N\omega_{\nu}} \emat
$$
and $\tilde\ub:= \bmat  a_1-jb_1& \ldots &a_{\nu} -j b_{\nu} \emat^{\top}$. Since we are to compute real quantities this complex formulation does however not offer  substantial simplifications.

Now the log of the prior is $-\infty$ outside of the intervals $J_{\ell}:= [\theta_{\ell}- W,\;\theta_{\ell}+W] $ and equal to $ \log \frac{1}{(2 W)^{\nu}}$  inside (this is obviously true for each frequency and  true for the whole prior). Hence the MAP estimator of $\omegab$ can be found by solving the constrained minimization problem
\begin{align}
\hat \omega^{\MAP} &= \argmin_{\omega}\,\left \{ \frac{1}{2\sigma_{\wb}^2}\, \| \yb - V(\omega ) \ub\|^2  + \nu\log(2   W) \right\} \notag\\
\text{subject to} &:  \omega_{\ell} \in J_{\ell}\, \quad l=1,\ldots,\nu \label{BayesOpt}
\end{align}
Suppose that $\hat \theta_{\ell}, \ell=1,\ldots,\nu$ and $\hat W$ are our subspace estimates of the hyperparameters of the prior. Since these are consistent as discussed in the previous section, substituting these estimates for the true values leads to an asymptotically equivalent optimization problem. Here  $W$ appears as  a nuisance parameter which shall be fixed  to the  estimated width $\hat W$. The  Bayes MAP estimate of $ \omegab$   can then in principle be compute by minimizing  the quadratic  criterion $ \| \yb - V(\omega ) \ub\|^2$ subject to the fixed deterministic   constraint $J$: an hypercube in $\Rbb^{\nu}$  centered in $\hat \theta$ of edge length  $2\hat W$.

The minimization problem \eqref{BayesOpt} can then equivalently be interpreted as the {\em Maximum Likelihood} estimation of a {\em deterministic angular frquency} $\omega$  ranging on the bounded compact set $J$. On this set the likelihood function is smooth and, according to standard statistical theory, the estimate must be  consistent, that is converging for $N\to \infty$ to some "true value" $\omega_0$ which has generated the observations, and asymptotically efficient. 

For a finite data set  problems of the type \eqref{BayesOpt}  have in general several local minima.   However because of the bounded, compact, feasible set constraint   $\omega \in J$,  the solution must    stay in  a small neighborhood of the center frequency.  Also, the squared norm term in \eqref{BayesOpt} depends on $\nu$  sinusoidal functions of $\omega$ and hence, for small enough $W$ 's there are no equivalent values of the  frequency $\omega$ leading to the same value of the cost. The function has generically  a  unique minimum.

We  now propose an algorithm for the problem   \eqref{BayesOpt}  by using  the  a priori    estimate $\hat \theta$ as a starting point for a gradient descent and solve the problem by a local search algorithm about $\hat\theta$. Since  the subspace estimate, $\hat\theta$,  asymptotically  tends to the center frequency,  for large $N$ we are allowed to  identify $\theta$ with $\hat\theta$.

As a first preliminary step, solve a least squares problem minimizing $ \| \yb - V(\hat\theta ) \ub\|^2$ to get an estimate of the amplitude vector $\ub$\footnote{The estimate can also be justified based on a {\em noninformative prior} as in \cite{Zacharias-etal-13}.} and use  the estimated amplitude vector,
$$
 \hat\ub= [V(\hat\theta)^{\top}V(\hat\theta)]^{-1}V(\hat\theta)^{\top} \yb
 $$
in place of $\ub$  in the formulas. 
 
 Let  $\tilde\yb(\hat\theta):= \yb-V(\hat\theta)\hat\ub$ and introduce the deviation $\tilde\omega := \omega -\hat\theta$. The gradient of $V$ with respect to $\omega$ computed at $\hat\theta$, is an array of $ 2\nu$ rectangular  $N \times \nu$ gradient matrices of the form
 \begin{align}
\nabla V(\hat \theta) \!&=   \!\! \left [ \nabla_{\theta_1} \cb_1(\hat{\theta}_1),\! \ldots , \!\nabla_{\theta_\nu} \cb_\nu(\hat {\theta}_\nu),\!
   \nabla_{\theta_1} \sbf_1 (\hat{\theta}_1), \!\ldots  \!, \!\nabla_{\theta_\nu} \sbf_\nu(\hat \theta_\nu)\right ]   \label{FullGrad}
\end{align}
where each matrix entry has only the $k$-th column nonzero,  equal  (in Matlab notation) to
\begin{equation}\label{ColGrad}
\nabla_{\theta_k} \cb_k(\hat \theta_k)[:,k]= - D_N \sbf_k(\hat \theta_k), \ \nabla_{\theta_k} \sbf_k(\hat \theta_k)[:,k]=  D_N \cb_k(\hat\theta_k),\,
\end{equation}
where $D_N =\diag\{1,2,\ldots,N\}$.
 Hence $\nabla \{V(\hat \theta) \hat\ub\}$  turns out to be  a linear combination of these $ 2\nu$, $N \times \nu$ matrices, properly combined by the corresponding components of the vector $\ub \in \Rbb^{2\nu}$. By this operation the zero columns are superseded  and the linear combination leads  to a  $N \times \nu$ matrix  made by linearly combining  the $2\nu$ nonzero column vectors in \eqref{FullGrad}   to form a final matrix which we denote $\Mb(\hat\theta)$. For $\nu=1$ we have for example $\ub=\bmat a &b\emat^{\top}$ and 
 $$
 M(\hat\theta) = D_N(-\sbf(\hat\theta) a + \cb(\hat\theta) b )\in  \Rbb^{N\times 1}\,.
  $$ 
   With this gradient calculation established, we proceed to approximate \eqref{BayesOpt} by a   constrained local linear Least Squares minimization 
 \begin{align}
\min_{\tilde\omega}\,&\left \{  \| \tilde \yb - \Mb(\hat\theta) \,\tilde\omega \|^2   \right\} \notag\\
\text{subject to} &:  |\tilde\omega_\ell |\leq \hat W \quad \text{equivalent to} \;\omega_l \in J_\ell\, ,\label{BayesLin}
\end{align}  
for $\ell=1,\ldots,\nu$.
The solution can be refined  iteratively by an  algorithm of the form
\begin{align}
\tilde\omega(k+1) &=  [\Mb( \omega(k) )^{\top}\Mb( \omega(k) )]^{-1}	\times \notag \\
&\Mb( \omega(k) )^{\top}\tilde\yb(\omega(k))\,\;   k=1,2,,\ldots 
\end{align}
where at each step $\omega(k):= \tilde\omega(k) + \hat\theta$ is substituted back  in place of $\omega(k-1)$ or, initially, of $\hat \theta$ in the expression of the gradient. The scheme is initialized for $k=0$ setting $\omega(0)= \hat  \theta$ and then stopping when the difference $\tilde\omega(k+1)-\tilde\omega(k) = \omega(k+1)-\omega(k)$ becomes small enough. It    requires to check  at each step if $ |\tilde\omega_\ell |\leq \hat W$ otherwise  the estimator should be  re-initialized. Alternatively, we may try to keep $ \|\tilde\omega\|$ small by adding a ridge penalty term $\lambda(k)\|\tilde\omega(k)\|^2$ with $\lambda(k)\to 0$ for $k$ large for consistency, to the least squares formulation. This may in fact also  make the computation of the inverse better conditioned.

\begin{remark}
	The reasoning above can be extended to include multiple snapshots of data in a straightforward manner. Since the conditional likelihood function for each snapshot multiplies given the hidden frequencies, the squared-norm term in the objective function of \eqref{BayesOpt} becomes $\|\Ycal-V(\omega)\Ucal\|^2_{\F}$, where $\Ycal$ and $\Ucal$ are matrices whose columns are the data and the amplitude vectors, respectively, and the subscript $_\F$ denotes the Frobenius norm.
	A similar linearization scheme can be devised to solve the enlarged optimization problem.\hfill $\Box$
\end{remark}

\section{Simulations}\label{Simul2}

In this section, we provide simulation evidence showing that the subspace algorithm described at the end of Section \ref{Subspace}   works quite well in the case of one or two  hidden frequencies. Simulations comparing with the MAP estimate will also be shown.

In the second step of the subspace algorithm,  in order to compute an estimate of the bandwidth $W$ using \eqref{rank_est_W} we need to estimate the asymptotic rank of the signal covariance matrix. It turns out that such a rank estimation task can be tricky if we are given (relatively) a small number of samples. This point will be discussed in the next subsection.

\subsection{The bandwidth estimator}\label{subsec:sim_W}



In the first example, we compare the decay property of eigenvalue sequence of the estimated covariance matrix with the theoretical behavior as shown in Figs.~\ref{fig:c40s} and \ref{fig:c80c} in the case of two hidden frequencies.
The measurements \eqref{y_meas_crs_sec} are generated with $\ab,\bb$ with uniform distribution $U[-1.3813,1.3813]$ and $\omegab$ drawn from the uniform distribution in $[\theta-W,\theta+W]$ with the hyperparameters $\theta=[\theta_1,\theta_2]=2\pi\times [0.3145, 0.4201]$ and $W=2\pi\times0.0465$. \footnote{These numbers come from one trial in the Monte-Carlo simulations.}
The signal length $N$ and the number of snapshots $L$ are both equal to $100$.
The additive noise is i.i.d.~Gaussian with
variance $\sigma^2_{\wb}$. The signal-to-noise ratio (SNR) defined as $20\log_{10}(\sigma/\sigma_{\wb})$ has a value of $15$ dB.
In Fig.~\ref{fig:eig_y}, we report the eigenvalues of the estimated covariance matrix \eqref{EstToepl}.
\begin{figure}[!h]
	\begin{centering}
		\includegraphics[width=0.5\textwidth]{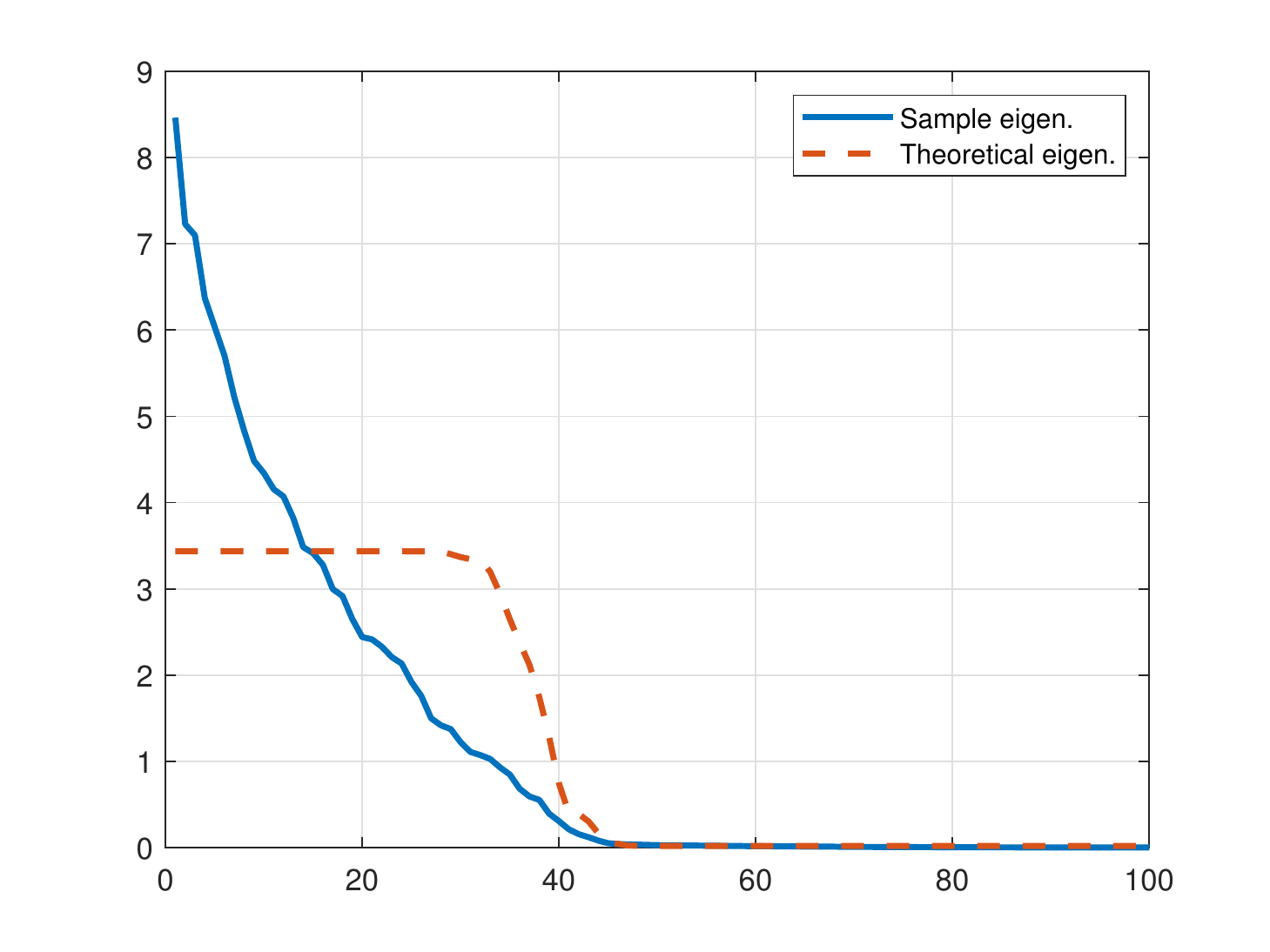}
		\caption{Eigenvalues of the theoretical and estimated covariance matrices with $N=L=100$. }
		\label{fig:eig_y}
	\end{centering}
\end{figure}

By comparison with the eigenvalues of the theoretical covariance matrix (red dashed line), we can see a significant distortion in the large eigenvalues due to the slow convergence of the estimator \eqref{EstToepl}.
However, the flat regions of two eigen-sequences still overlap nicely.
Inspired by such an observation, we propose an ad-hoc scheme: 
replace $\rank(\Kb_{N})$ in \eqref{rank_est_W} with the index maximizing the following ratio
\begin{equation}\label{rank_substitute}
\argmax_{k\in\{1,\dots,N-1\}} \frac{\lambda^2_k(\hat{\Sigmab}_{N})}{\lambda^2_{k+1}(\hat{\Sigmab}_{N})},
\end{equation}
where $\lambda_k(\hat{\Sigmab}_{N})$ denotes the $k$-th eigenvalue of the estimated covariance matrix $\hat{\Sigmab}_{N}$ arranged in nonincreasing order.
Intuitively, the maximum should be attained at the beginning of the flat region in the eigen-plot.

Next, we do a Monte-Carlo simulation to test our idea. In each trial, the hyperparameters are generated randomly. More precisely, first the bandwidth $W$ is drawn from the uniform distribution in $2\pi\times[0.01,0.05]$, and then the center frequencies $\theta_1$ and $\theta_2$ are drawn from $U \,[W,\pi-W]$ such that $|\theta_1-\theta_2|>2W$ so that the supporting intervals for the two frequencies do not overlap.
Given $L$ independent measurement sequences of length $N$, the covariance matrix is estimated through \eqref{EstToepl}, and then the rank is computed via \eqref{rank_substitute}, which gives an estimate of $W$ by \eqref{rank_est_W}. The \emph{relative} estimation error of $\hat{W}$ is defined by the ratio $(\hat{W}-W)/W$. Notice that we have not taken the absolute value of the numerator because we want to show that the scheme \eqref{rank_substitute} tends to \emph{overestimate} the rank of the signal covariance matrix. This feature is important in practice since the estimated rank determines the eigen-truncation performed in the subspace algorithm (Step 3). Clearly, we want to retain the eigenvectors of the covariance matrix corresponding to large eigenvalues. Hence, an underestimation of the rank should be avoided since otherwise, useful information about the spectral content of the signal could be lost.

Each Monte-Carlo simulation consists of $1000$ trials.
In the first experiment, we fix $N=L=100$ and estimate the bandwidth $W$, or equivalently the numerical rank of the signal covariance matrix, under different SNRs. 
In Fig.~\ref{fig:W_est_SNR}, the relative errors of $\hat{W}$ are depicted using the boxplot. We see from the box on the right that a low SNR results in an underestimate of the rank which is undesirable for the subsequent estimation of the band centers. The overall error is not small mainly because we have a poor estimate of the covariance matrix given the number of available samples (see Fig.~\ref{fig:eig_y}). However, we want to emphasize that the estimation of $W$ is a separate problem, and a large error here does not propagate to the estimation of the center frequencies. As we will see in the next subsection, the center frequencies can be estimated quite accurately given a rough estimate of the bandwidth.


In the second experiment, we fix the $\SNR=15$ dB and estimate $W$ as both $N$ and $L$ change while keeping $N=L$.
The result is depicted in Fig.~\ref{fig:W_est_N=M}. One can see that as $N=L$ increases, the estimates become more and more accurate.

\begin{figure}[!h]
	\begin{centering}
		\includegraphics[width=0.5\textwidth]{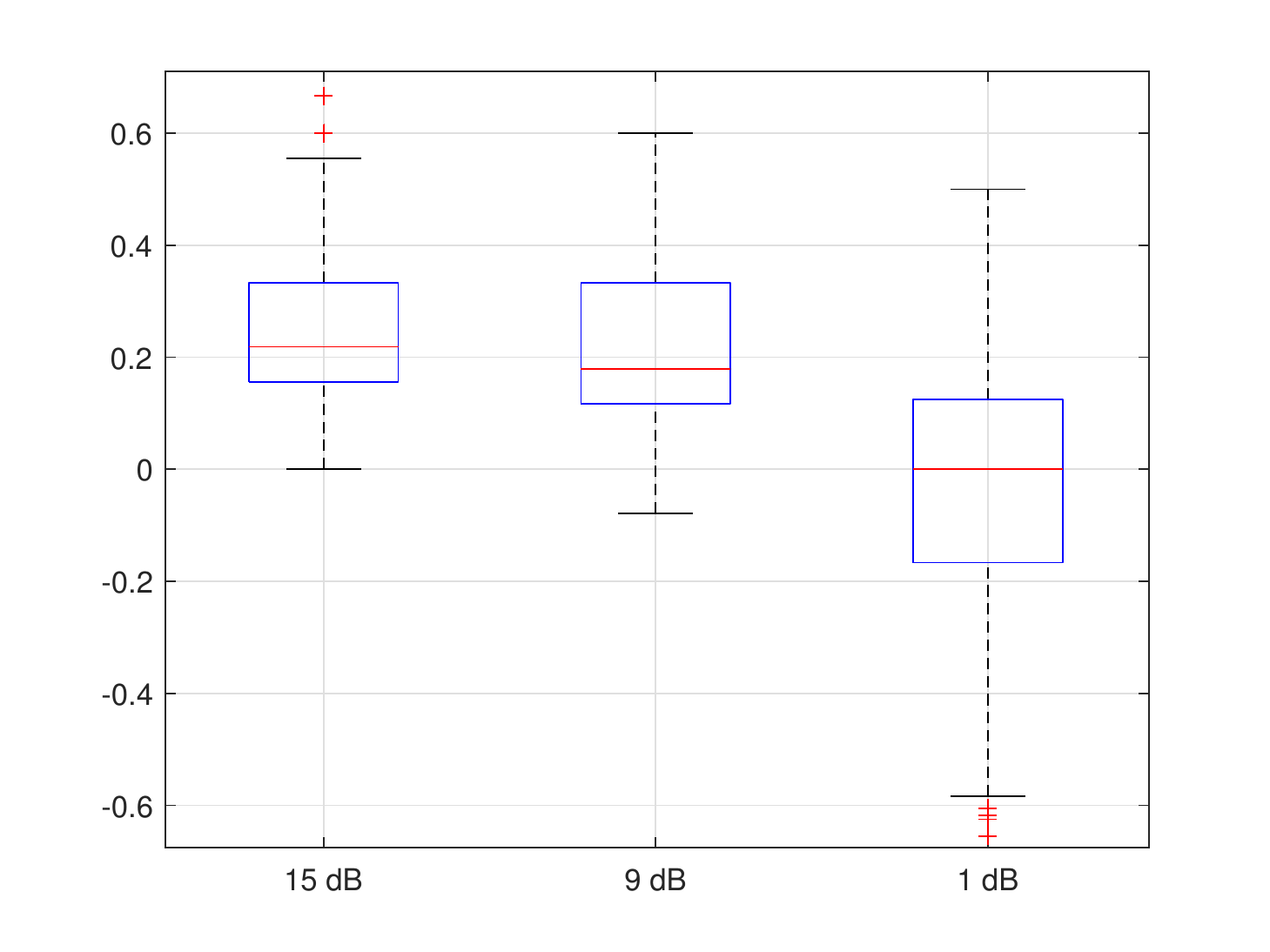}
		\caption{Relative estimation errors of the bandwidth $W$ versus the SNR with $N=L=100$.}
		\label{fig:W_est_SNR}
	\end{centering}
\end{figure}


\begin{figure}[!h]
	\begin{centering}
		\includegraphics[width=0.5\textwidth]{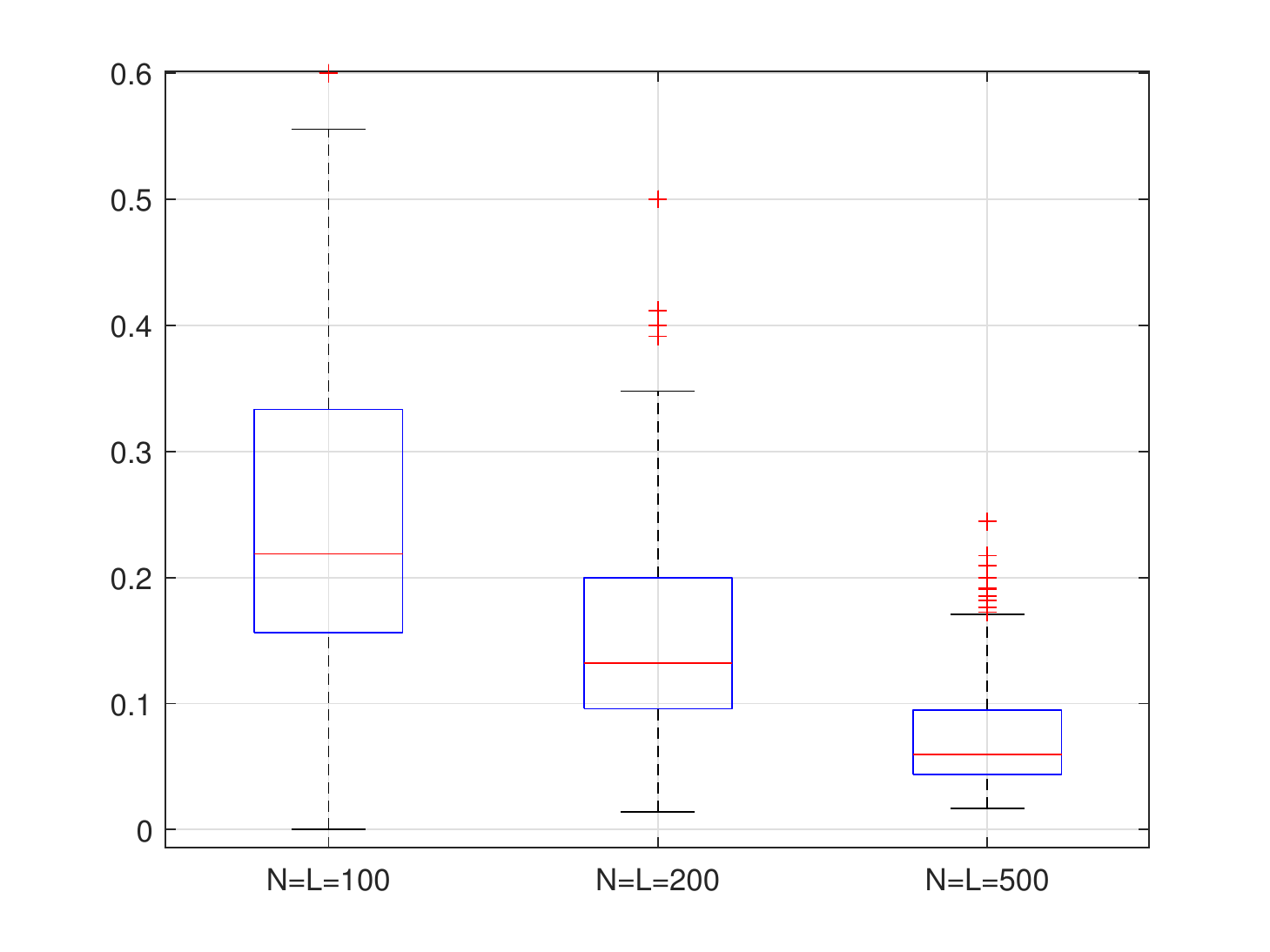}
		\caption{Relative estimation errors of the bandwidth $W$ versus the signal length and the number of cross sections $N=L$ while $\SNR=15$ dB is fixed.}
		\label{fig:W_est_N=M}
	\end{centering}
\end{figure}

We want to comment that preliminary results for the $\arccos$ estimator of one center frequency $\theta$ could be obtained from \eqref{est_f0} but they will not be discussed in depth since much more reliable estimates will be obtained by the subspace method of Sec.~\ref{Subspace}.

\subsection{The Subspace estimator of the band centers}\label{subsec:sim_emp}

Given the estimated rank of the signal covariance matrix in the previous subsection, we proceed to implement the subspace algorithm described at the end of Sec.~\ref{Subspace}.
Again we do a Monte-Carlo simulation of $1000$ trials. The signal length $N=100$ and the $\SNR=15$ dB are fixed, and we change the number of snapshots $L$. The data has already been generated in estimating $W$, and we only need to use   the estimated covariance matrix.


The relative estimation errors of the center frequency is defined as $\|\hat{\theta}-\theta\|/\|\theta\|$, and their values in Monte-Carlo simulations are plotted in Fig.~\ref{fig:sim_MC_two_hid_emp}. It appears that apart from the outliers (the red crosses), the performance of the algorithm is quite good as the cumulative relative error is lower than $2\%$, even in the case of few snapshots ($L=50$). The simulation result also seems to indicate that the algorithm works very well when the covariance estimate is sufficiently accurate. 

\begin{figure}[!h]
	\begin{centering}
		\includegraphics[width=0.5\textwidth]{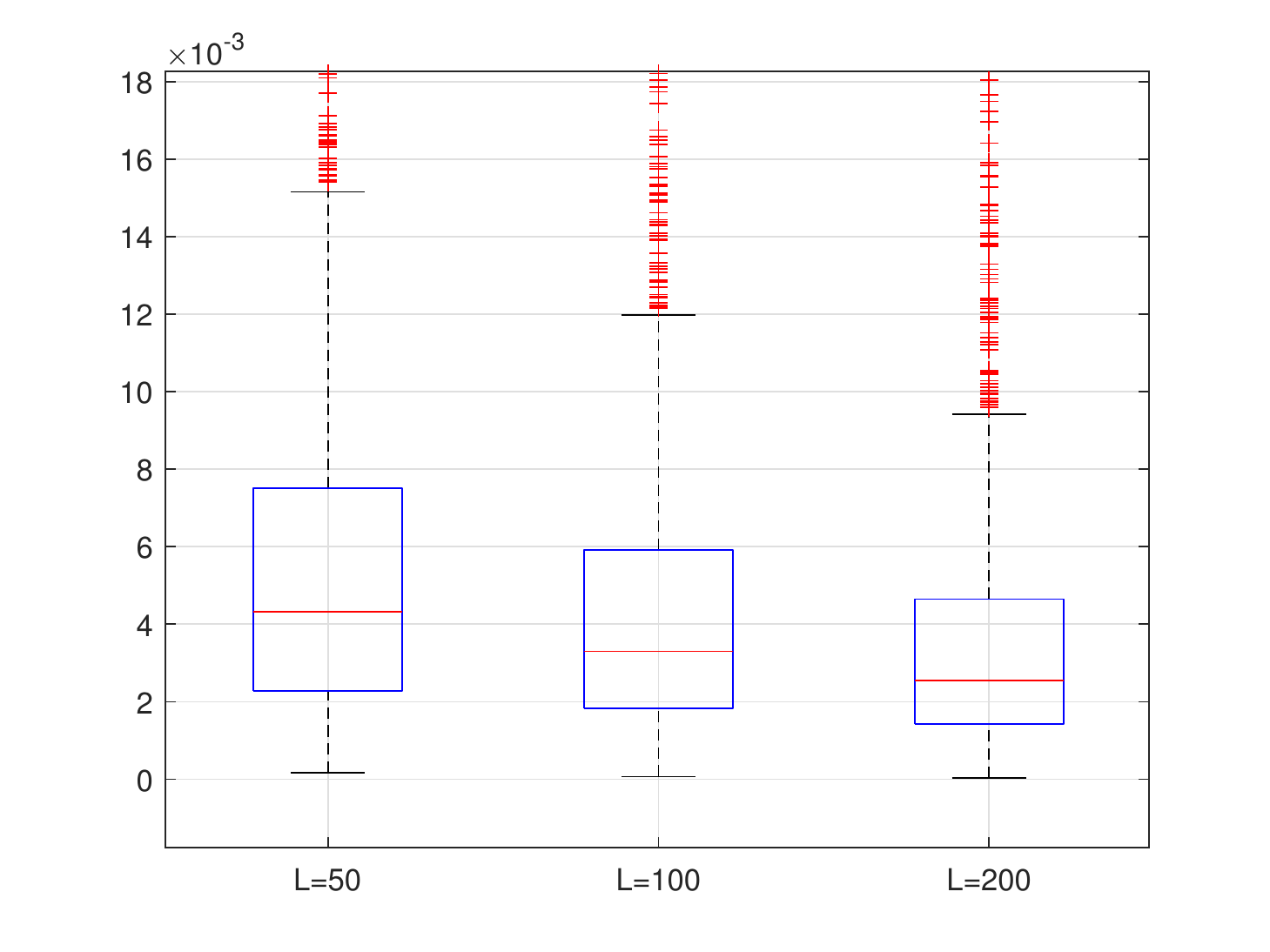}
		\caption{Relative estimation errors of two hidden frequencies $(\theta_1,\theta_2)$ using the Subspace method in Sec.~\ref{Subspace} versus the number $L$ of cross sections with $N=100$ and $\SNR=15$ dB.}
		\label{fig:sim_MC_two_hid_emp}
	\end{centering}
\end{figure}

Fig.~\ref{fig:disc_spec_two_hid} shows the discrete spectrum of the output process \eqref{outpeq} in one simulation trial in the case of $L=100$.
The horizontal axis is scaled to represent the frequency in Hz. In this particular trial, the true hyperparameters are $[\theta_1,\theta_2,W]=2\pi\times[0.1499,0.2524,0.0155]$, and the estimated band centers are $\hat{\theta}=2\pi\times[0.1503,0.2532]$. The theoretical (asymptotic) rank of the signal covariance matrix is $\frac{2\nu W}{\pi}N\approx12$, while the ratio scheme \eqref{rank_substitute} produces a rank estimate equal to $20$. One can see that the Dirac deltas indeed cluster around the true center frequencies inside the supporting interval. 


\begin{figure}[!h]
	\begin{centering}
		\includegraphics[width=0.5\textwidth]{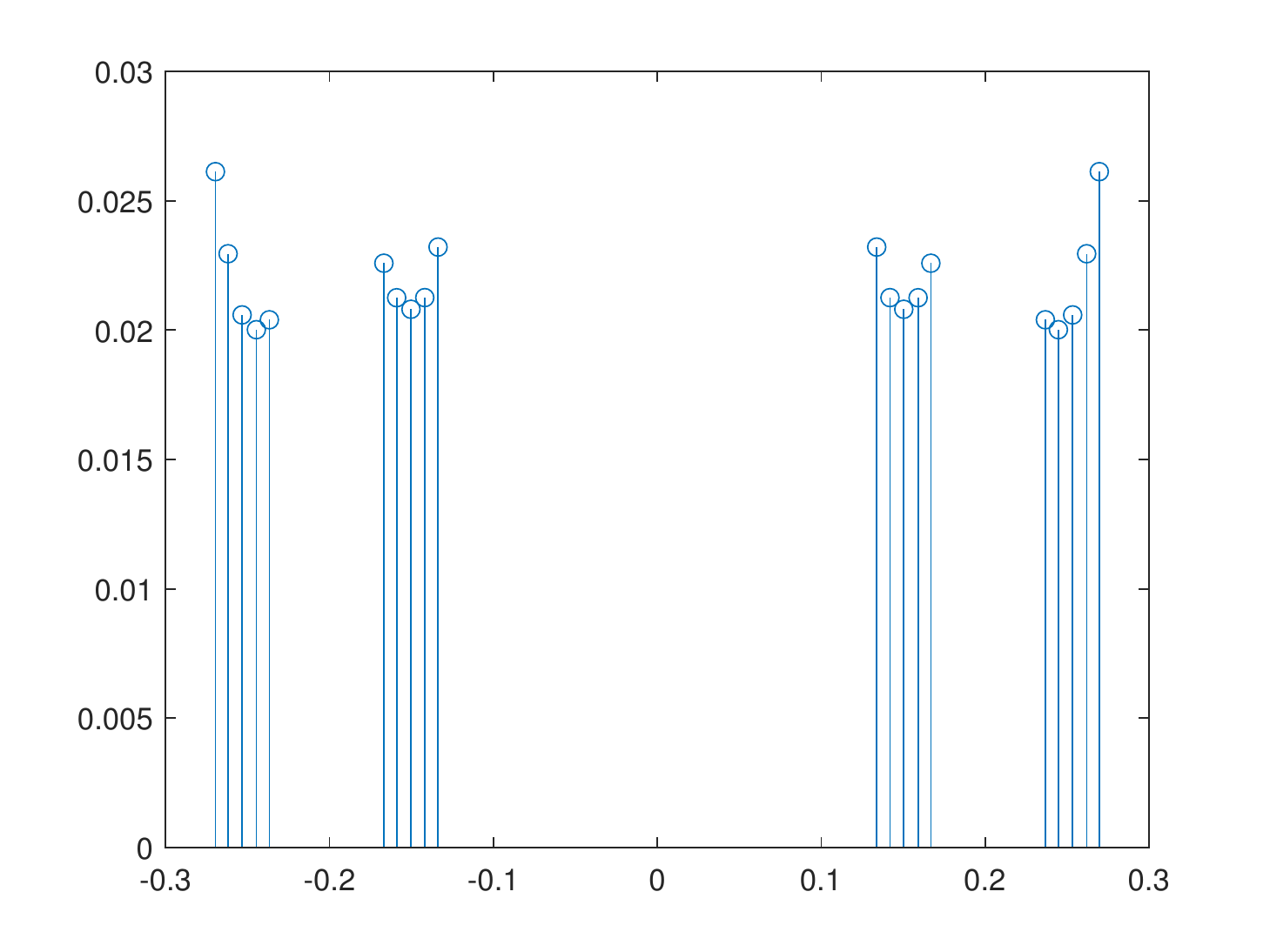}
		\caption{Discrete spectrum estimate with two hidden frequencies. The true hyperparameters are $[\theta_1,\theta_2,W]=2\pi\times[0.1499,0.2524,0.0155]$ and the estimated band centers are $\hat{\theta}=2\pi\times[0.1503,0.2532]$.}
		\label{fig:disc_spec_two_hid}
	\end{centering}
\end{figure}

\subsection{The Bayesian MAP estimator of the frequencies}\label{subsec:sim_Bayesian}

Given the center frequencies and the bandwidth estimated from the Subspace procedure, we can now compute the Empirical Bayes MAP estimator using the algorithm described in Sec.~\ref{Bayes}. The data are the same as those used for covariance and hyperparameter estimation. Once again, we fix the signal length $N=100$ and $\SNR=15$ dB, and do Monte-Carlo simulations of $1000$ trials as the number $L$ of cross sections changes. The relative errors of $\hat{\omega}^{\MAP}$ with respect to the true center frequencies are shown in Fig.~\ref{fig:sim_MC_two_hid_Bayesian}. It appears that the MAP estimate of the frequencies is close to the true band centers with a cumulative relative error below $6\%$. Moreover, the estimation accuracy improves as more snapshots of data are available. It is noticed that the cumulative error size is larger than that of the empirical Subspace  method (Fig.~\ref{fig:sim_MC_two_hid_emp}) probably due to the linearization scheme in solving the original nonlinear least squares problem subject to interval constraints.

\begin{figure}[!h]
	\begin{centering}
		\includegraphics[width=0.5\textwidth]{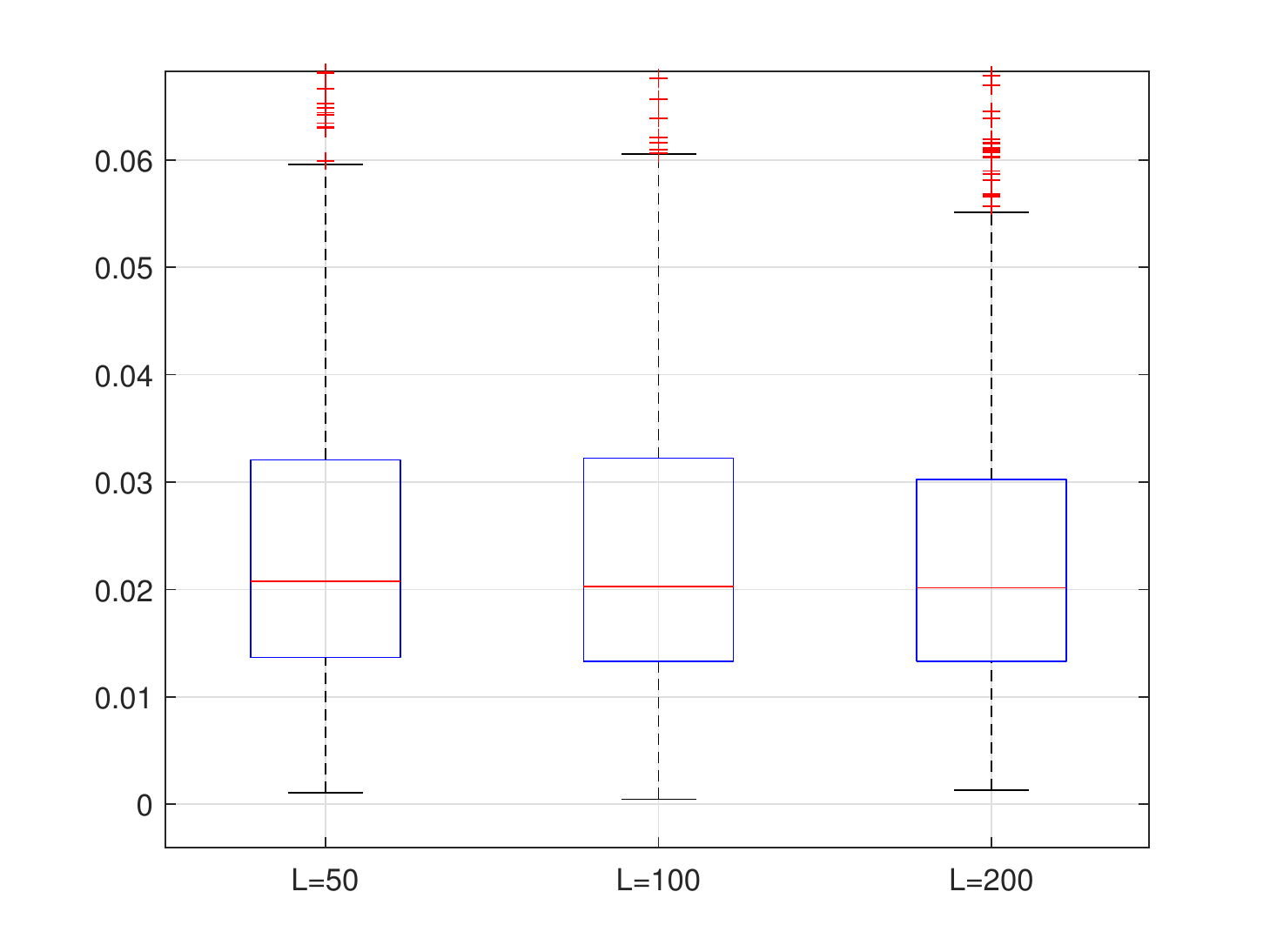}
		\caption{Relative estimation errors of two hidden frequencies $(\theta_1,\theta_2)$ using the Bayesian MAP method in Sec.~\ref{Bayes} versus the number $L$ of cross sections while $N=100$ and $\SNR=15$ dB.}
		\label{fig:sim_MC_two_hid_Bayesian}
	\end{centering}
\end{figure}

\begin{remark} 
A  quite reasonable conjecture, which unfortunately so far we have not   been able to prove rigorously,  is  that  for $N\to \infty$ and small enough $W$, the Bayesian estimate  $\hat \omega^{\MAP}$ should  converges a.s. to the true center frequency $\theta$.

The conjecture is based on the observation that both $\hat \omega^{\MAP}$ and $\hat \theta$ are asymptotic maximizers  of    the likelihood function based on  the same data. In fact, $\hat \theta$ asymptotically solves \eqref{MomentEq} which is the maximizing equation of the   marginal likelihood function,  marginalized by integrating with respect to the a priori  distribution of $\omega$ and hence   parametrized only in terms of the  hyperparameters $(\theta,W)$. We will leave a  detailed discussion of this point  to a future publication.\hfill $\Box$
\end{remark}

\begin{remark}
	At the end of this section want to comment on the difference between  our method and  classical subspace methods for frequency estimation such as MUSIC, ESPRIT, etc. All  classical methods are designed for oscillatory signals with deterministic frequencies and  perform the eigen-truncation of the estimated covariance matrix at an index equal to $2\nu$ where $\nu$ is the number of unknown frequencies (the factor $2$ is due to complexification of the real signal). In contrast, we show that in the case of uniform random frequencies, we have a stochastic multiband signal, and the eigen-truncation should be done at the approximate index $4WN\times\nu$ corresponding to the asymptotic rank of the signal covariance matrix. Based on this observation, it is not surprising that classical subspace methods do not apply to the current problem setup.
	Moreover, our Empirical Bayes procedure provides both the band centers and the bandwidth for the random frequencies, which can be interpreted as confidence intervals for the frequency estimation. \hfill$\Box$
\end{remark}

\begin{remark}
Concerning the Atomic Norm approach, we may just say that it views the observed sinusoidal signal as a deterministic linear combination of elementary exponential components with deterministic frequencies. For this reason (similarly to  the previous remark), it does not seem possible to compare to our random-frequency signal model, although it has been extended to deal with deterministic multiband signals \cite{Zhuetal-17}.\hfill $\Box$
\end{remark}

\section{Conclusions}\label{Conclusions}
We have formulated   the problem of frequency estimation in an Empirical   Bayesian framework by first imposing a natural uniform prior probability density on the unknown frequency. In this way the estimation of the hyperparameters of the a priori distribution can be accomplished by exploiting the special structure of the covariance matrix of the posterior process which has been long studied  in the framework of energy concentration problems by the signal processing community.  In this setting  the solution can be  based on essentially linear techniques  of  subspace identification. Using the estimated prior parameters one can adapt the prior to the data and this  leads to Bayesian estimates which are asymptotically maximum likelihood and therefore the best  possible in a variety  of metrics.
The simulation results using this    Empirical Bayesian philosophy     are very encouraging.

\appendix

In the proof of Theorem \ref{thm_asymp_eigen} we shall need two auxiliary lemmas. The first is just a simple technical fact.

\begin{lemma}\label{lem_lim}
If two sequences of bounded real numbers $\{a_n\},\{b_n\}$ are such that
\begin{equation}\label{lim_diff_zero}
\lim_{n\to\infty} (a_n-b_n) = 0,
\end{equation}
then
\begin{equation*}
\limsup_{n\to\infty} a_n = \limsup_{n\to\infty} b_n ,
\end{equation*}
and
\begin{equation*}
\liminf_{n\to\infty} a_n = \liminf_{n\to\infty} b_n .
\end{equation*}
\end{lemma}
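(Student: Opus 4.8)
The plan is to prove the elementary fact by using the hypothesis \eqref{lim_diff_zero} to write $a_n = b_n + \varepsilon_n$ where $\varepsilon_n \to 0$, and then exploit the subadditivity/superadditivity of $\limsup$ and $\liminf$ together with the boundedness assumption (which guarantees these quantities are finite real numbers, not $\pm\infty$). Concretely, for the $\limsup$ statement, I would use the standard inequality $\limsup_{n\to\infty}(x_n + y_n) \le \limsup_{n\to\infty} x_n + \limsup_{n\to\infty} y_n$, valid whenever the right-hand side is not of the indeterminate form $\infty - \infty$; here all three terms are finite by boundedness. Applying this to $a_n = b_n + \varepsilon_n$ with $\limsup \varepsilon_n = 0$ gives $\limsup a_n \le \limsup b_n$, and the symmetric argument with $b_n = a_n - \varepsilon_n$ (noting $\limsup(-\varepsilon_n) = -\liminf \varepsilon_n = 0$ as well, since $\varepsilon_n\to 0$) gives the reverse inequality $\limsup b_n \le \limsup a_n$.

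The $\liminf$ statement follows either by the mirror-image argument (using $\liminf(x_n+y_n) \ge \liminf x_n + \liminf y_n$), or more slickly by applying the already-proved $\limsup$ identity to the sequences $\{-a_n\}$ and $\{-b_n\}$, which are still bounded and still satisfy $(-a_n) - (-b_n) \to 0$, and then invoking $\liminf x_n = -\limsup(-x_n)$. I would take whichever route is shortest to write; the $-a_n, -b_n$ trick avoids repeating the estimate.

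I do not expect any genuine obstacle here: this is a routine real-analysis lemma and the only thing to be careful about is making sure the subadditivity of $\limsup$ is invoked in a setting where it is legitimate (no $\infty - \infty$), which is exactly what the boundedness hypothesis on $\{a_n\}$ and $\{b_n\}$ secures. If one prefers to avoid quoting the subadditivity inequality, an equally short self-contained route is: fix $\epsilon > 0$; choose $N$ so that $|a_n - b_n| < \epsilon$ for $n \ge N$; then $\sup_{n \ge m} a_n \le \sup_{n \ge m} b_n + \epsilon$ and vice versa for every $m \ge N$; let $m \to \infty$ and then $\epsilon \to 0$. Either presentation is a few lines.
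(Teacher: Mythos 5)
Your proof is correct, but it takes a different route from the paper's. The paper argues via subsequential limits: it sets $\bar a := \limsup_n a_n$, extracts a subsequence $a_{n_k}\to\bar a$, then a further subsequence along which $b_{n_{k_j}}$ converges to $\hat b:=\limsup_k b_{n_k}$, uses \eqref{lim_diff_zero} along that sub-subsequence to conclude $\hat b=\bar a$, and finally notes $\limsup_n b_n\ge\hat b=\bar a$, with the reverse inequality by symmetry. You instead write $a_n=b_n+\varepsilon_n$ with $\varepsilon_n\to0$ and invoke subadditivity of $\limsup$ (legitimate here since boundedness rules out $\infty-\infty$), getting both inequalities at once and disposing of the $\liminf$ case by passing to $\{-a_n\},\{-b_n\}$; your fallback $\epsilon$--$N$ argument with $\sup_{n\ge m}$ is equally sound. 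The trade-off is minor: your algebraic route is shorter and handles both directions and both statements more uniformly, while the paper's subsequence extraction uses only the characterization of $\limsup$ as the largest subsequential limit and needs no inequality for sums. Both are complete proofs of this routine fact, and your observation that $\limsup(-\varepsilon_n)=-\liminf\varepsilon_n=0$ correctly covers the one point where a sign error could creep in.
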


\begin{proof}
The argument is quite standard. Let $\bar{a}:=\limsup_{n\to\infty} a_n$. Then there exits a subsequence $\{a_{n_k}\}$ converging to $\bar{a}$. Define $\hat{b}:=\limsup_{k\to\infty} b_{n_k}$. Then there exists a sub-subsequence $\{b_{n_{k_j}}\}$ converging to $\hat{b}$. The condition \eqref{lim_diff_zero} holds for the subsequence indexed by $n_{k_j}$, which implies that $\bar{a}=\hat{b}$. It then follows that $\bar{b}:=\limsup_{n\to\infty} b_n \geq \hat{b}=\bar{a}$. A symmetric argument leads to $\bar{a}\geq\bar{b}$, and therefore $\bar{a}=\bar{b}$. The proof for the limit inferior is similar and hence omitted.
\end{proof}

The next lemma concerns the sum of squared eigenvalues of $\Rb$.

\begin{lemma}\label{lem_sum_sqr_eigen}
\begin{equation}
\lim_{N\to\infty} \frac{1}{N} \sum_{j=1}^{N} \lambda_j^2(N) = \frac{\m(J)}{2\pi} \,.
\end{equation}
\end{lemma}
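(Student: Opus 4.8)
The plan is to compute $\sum_{j=1}^N \lambda_j^2(N) = \trace(\Rb^2)$ directly from the Toeplitz structure of $\Rb$, and to show that, after division by $N$, it converges to $\m(J)/2\pi$. Since $\Rb$ has Hermitian Toeplitz entries $\rho(j-k)$ with $\rho$ the bandpass impulse response from \eqref{bandpass_imp_resp}, the $(j,k)$ entry of $\Rb^2$ is $\sum_{m=0}^{N-1} \rho(j-m)\rho(m-k)$, so
\begin{equation*}
\trace(\Rb^2) = \sum_{j=0}^{N-1}\sum_{m=0}^{N-1} |\rho(j-m)|^2 = \sum_{|\tau|\le N-1} (N-|\tau|)\,|\rho(\tau)|^2,
\end{equation*}
using $\rho(-\tau) = \rho(\tau)^*$. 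Dividing by $N$ gives $\frac1N\trace(\Rb^2) = \sum_{|\tau|\le N-1}(1-|\tau|/N)|\rho(\tau)|^2$.

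First I would observe that $\rho \in \ell^2(\Zbb)$, because $\rho$ is the inverse DTFT of the bounded, compactly supported (indicator) function $\chi_J$, so by Parseval $\sum_{\tau\in\Zbb}|\rho(\tau)|^2 = \frac{1}{2\pi}\int_{-\pi}^{\pi}\chi_J(\omega)^2\,\d\omega = \frac{1}{2\pi}\int_J \d\omega = \m(J)/2\pi$. Then the convergence is an elementary dominated-convergence (or Cesàro) argument: for each fixed $\tau$ the weight $(1-|\tau|/N)$ tends to $1$ as $N\to\infty$, and the weights are bounded by $1$ in absolute value while the series $\sum_\tau |\rho(\tau)|^2$ converges; hence
\begin{equation*}
\lim_{N\to\infty}\frac1N\sum_{j=1}^N \lambda_j^2(N) = \sum_{\tau\in\Zbb}|\rho(\tau)|^2 = \frac{\m(J)}{2\pi},
\end{equation*}
which is the claim. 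One can make this fully rigorous by splitting the sum at $|\tau|\le M$ and $|\tau|>M$ for large $M$, bounding the tail uniformly in $N$.

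The main obstacle here is essentially bookkeeping rather than depth: one must be careful that $\rho$ need not be real in general (when $J$ is not symmetric about the origin), so the identity $\rho(-\tau)=\rho(\tau)^*$ must be used in place of evenness, and the diagonal term $\tau=0$ contributes $\rho(0)=\m(J)/2\pi$ with weight $1$. Everything else is a standard truncated-sum estimate. As a sanity check, this lemma is consistent with Lemma~\ref{lem_lim} applied to $a_N = \frac1N\sum\lambda_j(N)$ and $b_N = \frac1N\sum\lambda_j^2(N)$: since $0<\lambda_j(N)\le 1$ we have $\lambda_j^2(N)\le\lambda_j(N)$, so $b_N\le a_N$, and \eqref{sum_eigen} gives $a_N = \m(J)/2\pi$ exactly; together with the present lemma this forces $\lambda_j(N)^2 - \lambda_j(N)$ to be small on average, which is precisely the ``eigenvalues cluster at $0$ and $1$'' phenomenon exploited in the proof of Theorem~\ref{thm_asymp_eigen}.
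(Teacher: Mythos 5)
Your proof is correct and follows essentially the same route as the paper: both compute $\trace(\Rb^2)=\sum_{|\tau|\le N-1}(N-|\tau|)\,|\rho(\tau)|^2$ from the Hermitian Toeplitz structure, pass to the limit by dominated convergence for the counting measure on $\Zbb$, and identify the limit via Parseval as $\m(J)/2\pi$. Nothing is missing.
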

\begin{proof}
Since $\Rb$ is Hermitian, we have
\begin{equation}
\begin{split}
\sum_{j=1}^{N} \lambda_j^2(N) & = \trace \Rb^2 =\trace (\Rb \Rb^*) \\
 & = \sum_{j=-N+1}^{N-1} |\rho(j)|^2 (N-|j|).
\end{split}
\end{equation}
It follows that
\begin{equation}
\frac{1}{N} \sum_{j=1}^{N} \lambda_j^2(N) = \sum_{j=-N+1}^{N-1} |\rho(j)|^2 \left( 1-\frac{|j|}{N} \right).
\end{equation}
We can view the latter summation over $\Zbb$ by adding zeros. Apparently, each term in the infinite sum is dominated by $|\rho(j)|^2$. Moreover, for each fixed $j$ the term-wise limit as $N\to\infty$ is also $|\rho(j)|^2$. Applying Lebesgue's dominated convergence theorem for the counting measure on $\Zbb$, we can conclude that
\begin{equation}
\begin{split}
\lim_{N\to\infty} \frac{1}{N} \sum_{j=1}^{N} \lambda_j^2(N) & = \sum_{j\in\Zbb} |\rho(j)|^2 \\
 & = \frac{1}{2\pi} \int_{-\pi}^{\pi} |\chi_J(\omega)|^2 \d\omega = \frac{\m(J)}{2\pi},
\end{split}
\end{equation}
where the second equality is the Parseval identity.
\end{proof}

\noindent{\bf Proof of Theorem \ref{thm_asymp_eigen}}
\begin{proof} 
We first show that the number of eigenvalues not close to $0$ or $1$ is $o(N)$.
To this end, define the function
\begin{subequations}
\begin{align}
\Jbb(N) & := \sum_{j=1}^{N} \lambda_j(N) \left( 1-\lambda_j(N) \right) \label{J_n_func}\\
 & = \sum_{j=1}^{N} \lambda_j(N) - \sum_{j=1}^{N} \lambda_j^2(N) ,
\end{align}
\end{subequations}
where each summand in \eqref{J_n_func} is nonnegative.
Then according to \eqref{sum_eigen} and Lemma \ref{lem_sum_sqr_eigen}, we have
\begin{equation}
\lim_{N\to\infty} \frac{\Jbb(N)}{N} = 0 .
\end{equation}
In other words, the function $\Jbb(N)$ is $o(N)$.
Fix $0<\delta<\gamma<1$, and the number of eigenvalues $\delta\leq \lambda_j(N) <\gamma$ is $M(\delta,n)-M(\gamma,N)$. Clearly, for these eigenvalues we have
\begin{equation}
\lambda_j(N) \left( 1-\lambda_j(N) \right) > \delta (1-\gamma) :=\nu>0,
\end{equation}
which implies that
\begin{equation}
\begin{split}
\Jbb(N) & \geq \sum_{\delta\leq \lambda_j(N) <\gamma} \lambda_j(N) \left( 1-\lambda_j(N) \right) \\
 & \geq \nu \left[ M(\delta,N)-M(\gamma,N) \right] \geq 0.
\end{split}
\end{equation}
It follows 
that
\begin{equation}\label{lim_zero}
\lim_{N\to\infty} \frac{M(\delta,N)-M(\gamma,N)}{N} = 0,
\end{equation}
which means that the quantity $M(\delta,N)-M(\gamma,N)$ is also $o(N)$.

Next, define the quantities
\begin{equation}
\begin{split}
M_+ & := \limsup_{N\to\infty} \frac{M(\gamma,N)}{N} , \\
M_- & := \liminf_{N\to\infty} \frac{M(\gamma,N)}{N} . \\
\end{split}
\end{equation}
Applying Lemma~\ref{lem_lim} in this appendix to the relation \eqref{lim_zero}, we know that both $M_+$ and $M_-$ do not depend on $0<\gamma<1$. We want to establish that the two quantities coincide so that the ordinary limit in \eqref{lim_asymp_eigen} exits and is equal to the common value. Observe that
\begin{equation}
\trace \Rb = \sum_{j=1}^{M(\gamma,N)} \lambda_j(N) + \underbrace{\sum_{M(\gamma,N)+1}^{N} \lambda_j(N)}_{> 0} \geq \gamma M(\gamma,N),
\end{equation}
and similarly
\begin{equation}
\begin{split}
\sum_{j=1}^{N} \lambda_j^2(N) & = \sum_{j=1}^{M(\gamma,N)} \lambda_j^2(N) + \underbrace{\sum_{M(\gamma,N)+1}^{N} \lambda_j^2(N)}_{\text{here each } \lambda_j(N)<\gamma} \\
 & < \sum_{j=1}^{M(\gamma,N)} 1 + \sum_{M(\gamma,N)+1}^{N} \gamma \lambda_j(N) \\
 & < M(\gamma,N) + \gamma \trace \Rb .
\end{split}
\end{equation}
It follows that
\begin{equation}
\sum_{j=1}^{N} \lambda_j^2(N) - \gamma \trace \Rb \leq M(\gamma,N) \leq \frac{\trace \Rb}{\gamma},
\end{equation}
and furthermore, we have
\begin{subequations}
\begin{align}
M_+ & \leq \limsup_{N\to\infty} \frac{\trace \Rb}{\gamma N} = \frac{\m(J)}{2\pi\gamma} , \label{M+_inequal} \\
M_- & \geq \liminf_{N\to\infty} \frac{1}{N} \left( \sum_{j=1}^{N} \lambda_j^2(N) - \gamma \trace \Rb \right) \notag \\
 & = (1-\gamma) \frac{\m(J)}{2\pi} \label{M-_inequal} ,
\end{align}
\end{subequations}
where we have used Lemma \ref{lem_sum_sqr_eigen} again in \eqref{M-_inequal}.
Letting $\gamma\to1$ in \eqref{M+_inequal} and $\gamma\to0$ in \eqref{M-_inequal}, we obtain
\begin{equation}
\frac{\m(J)}{2\pi} \leq M_- \le M_+ \le \frac{\m(J)}{2\pi},
\end{equation}
and the claim of the theorem follows.
\end{proof}

The next proposition concerns the time average of one sample path of the noisy sinusoidal signal.

\begin{proposition}\label{prop_time_average}
	Let
	\begin{equation}
	\begin{split}
	y(t) & = x(t) + w(t) \\
	 & = a \cos (\omega t) + b \sin (\omega t) + w(t)
	\end{split}
	\end{equation}
	be a sample path of the process \eqref{y_measurement}, where $t=1,2,\dots$. Then for each fixed $\omega$ with $|\omega|<\pi$,
	\begin{equation*}
	\begin{split}
	\frac{1}{N} \sum_{t=1}^{N} y(t+\tau) y(t) \to \frac{a^2+b^2}{2} \cos \omega\tau + \sigma_{\wb}^2 \delta(\tau,0)
\end{split}
\end{equation*}
	as $N\to\infty$ with probability one.
\end{proposition}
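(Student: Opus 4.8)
The plan is to split the product into a purely deterministic term, two ``signal $\times$ noise'' cross terms, and a ``noise $\times$ noise'' term,
\[
y(t+\tau)y(t) = x(t+\tau)x(t) + x(t+\tau)w(t) + w(t+\tau)x(t) + w(t+\tau)w(t),
\]
and to compute the Ces\`aro average $\frac1N\sum_{t=1}^N(\cdot)$ of each of the four pieces separately; the asserted limit is then the sum of the four individual limits, holding on a common probability-one event. Recall that in this statement $a,b$ and $\omega$ are fixed numbers, so $x$ is a deterministic sequence and the only randomness is carried by the i.i.d.\ sequence $\{w(t)\}$.

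For the deterministic term, the product-to-sum identities give
\[
x(t+\tau)x(t) = \frac{a^2+b^2}{2}\cos\omega\tau + \frac{a^2-b^2}{2}\cos\omega(2t+\tau) + ab\,\sin\omega(2t+\tau).
\]
The first summand does not depend on $t$, while the other two are sinusoids of angular frequency $2\omega$. Since $0<\omega<\pi$ (the value $\omega=0$ is excluded by the standing assumption $|\theta|>W>0$ on the support of the prior), $e^{2i\omega}\neq1$ and hence $\frac1N\sum_{t=1}^N e^{i\omega(2t+\tau)}=O(1/N)$, so $\frac1N\sum_{t=1}^N x(t+\tau)x(t)\to\frac{a^2+b^2}{2}\cos\omega\tau$.

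For the cross terms, fix $\tau$ and set $Z_t:=x(t+\tau)w(t)$. The $Z_t$ are independent (deterministic multiples of the independent $w(t)$), have zero mean, and variance $\var(Z_t)=x(t+\tau)^2\sigma_{\wb}^2\le(|a|+|b|)^2\sigma_{\wb}^2$ bounded uniformly in $t$; consequently $\sum_{t\ge1}\var(Z_t)/t^2<\infty$, and Kolmogorov's strong law of large numbers for independent (not necessarily identically distributed) summands, combined with Kronecker's lemma, yields $\frac1N\sum_{t=1}^N Z_t\to0$ a.s. The same argument applied to $x(t)w(t+\tau)$ (the noise indices $t+\tau$, $t=1,\dots,N$, being again distinct, hence independent) gives $\frac1N\sum_{t=1}^N w(t+\tau)x(t)\to0$ a.s. Finally, $\{w(t+\tau)w(t)\}_{t\ge1}$ is a fixed measurable function of finitely many coordinates of the i.i.d.\ sequence $\{w(t)\}$, hence stationary and $|\tau|$-dependent, therefore ergodic; Birkhoff's pointwise ergodic theorem (equivalently, the strong law for stationary $m$-dependent sequences) gives $\frac1N\sum_{t=1}^N w(t+\tau)w(t)\to\E\{w(t+\tau)w(t)\}=\sigma_{\wb}^2\delta(\tau,0)$ a.s. Adding the four limits proves the claim for the given $\tau$; intersecting the corresponding probability-one events over $\tau\in\Zbb$ gives it simultaneously for all lags. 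The only genuinely nontrivial steps are the last two: upgrading $L^2$ convergence to almost-sure convergence for the \emph{modulated} noise sum (where uniform boundedness of $x$ is exactly what makes Kolmogorov's criterion applicable) and for the \emph{quadratic} noise sum $\frac1N\sum w(t+\tau)w(t)$, whose summands are correlated whenever $|t-s|\le|\tau|$, so the plain i.i.d.\ SLLN does not apply and one must exploit stationarity and finite-range dependence; everything else is elementary trigonometry.
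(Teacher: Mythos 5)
Your proof is correct and follows essentially the same route as the paper's: the identical four-term decomposition, with the deterministic term handled by elementary trigonometry (the paper simply cites S\"oderstr\"om--Stoica for this) and each noise term dispatched by a strong law of large numbers. If anything, your treatment is the more careful one --- the paper justifies the cross terms by asserting that $e^{i\omega t}w(t)$ is i.i.d., whereas these variables are independent but not identically distributed, which is exactly the gap your appeal to Kolmogorov's SLLN for independent non-identically-distributed summands closes; your explicit handling of the excluded case $\omega=0$ and of the $m$-dependence of $w(t+\tau)w(t)$ likewise fills in details the paper leaves implicit.
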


\begin{proof}
We have
\begin{align*}
	  \frac{1}{N} \sum_{t=1}^{N} y(t+\tau) y(t) &= \frac{1}{N} \sum_{t=1}^{N} \left[ x(t+\tau) x(t) + x(t+\tau) w(t) \right. \\ 
	&  \left. + w(t+\tau) x(t) + w(t+\tau) w(t) \right]
\end{align*}
and that the first time average converges to $\frac{a^2+b^2}{2} \cos \omega\tau$ is shown in \cite[pp. 105-109]{Soderstrom-S-89} or \cite[pp. 171-172]{Stoica-M-05}. That the average of each cross term in the middle tends to $0$, follows since the process $\tilde \wb(t):= e^{i\omega t} \wb(t)$ is (complex) zero-mean i.i.d. and by the  assumed uncorrelation so  is also  $\ab e^{i\omega \tau}\tilde \wb(t)$ and hence so is  its real part, so that the law of large numbers holds for each cross term. The time average of the last term tends to $\sigma_{\wb}^2 \delta(\tau,0)$ again by the law of large numbers. \end{proof}


\bibliographystyle{IEEEtran}
\bibliography{biblio-FreqEst}




\end{document}